\documentclass[twocolumn,showpacs,laps,pra,amsmath,amssymb]{revtex4}

\usepackage{bm,color}
\usepackage{graphicx}
\usepackage{epstopdf}
\usepackage{amsthm}

\usepackage[normalem]{ulem}
\usepackage[usenames,dvipsnames]{xcolor}

\newcommand{\beq}{\begin{equation}}
\newcommand{\eeq}{\end{equation}}
\newcommand{\nn}{\nonumber \\}
\newcommand{\bra}[1]{ \langle{#1} |}
\newcommand{\ket}[1]{ |{#1} \rangle}
\newcommand{\braket}[2]{\langle {#1} | {#2} \rangle}

\newcommand{\ramse}{RAMSE}
\newcommand{\amse}{AMSE}
\newcommand{\mse}{MSE}
\newcommand{\rmse}{RMSE}
\newcommand{\unb}{$U(1)$-unbiased}

\newtheorem{theorem}{Theorem}
\newtheorem{lemma}{Lemma}

\begin{document}
\title{Optimal Heisenberg-style bounds for the average performance of arbitrary phase estimates}
\author{Dominic W. Berry${}^1$, Michael J. W. Hall${}^2$, Marcin Zwierz${}^2$ and Howard M. Wiseman${}^2$}
\affiliation{${}^1$Department of Physics and Astronomy, Macquarie University, Sydney, NSW 2109, Australia\\
${}^2$Centre for Quantum Computation and Communication Technology (Australian Research Council), Centre for Quantum Dynamics, Griffith University, Brisbane, QLD 4111, Australia}
\date{\today}

\begin{abstract}
The ultimate bound to the accuracy of phase estimates is often assumed to be given by the Heisenberg limit.
Recent work seemed to indicate that this bound can be violated, yielding measurements with much higher accuracy than was previously expected.
The Heisenberg limit can be restored as a rigorous bound to the accuracy provided one considers the accuracy averaged over the possible values of the unknown phase, as we have recently shown [Phys.~Rev.~A \textbf{85}, 041802(R) (2012)].
Here we present an expanded proof of this result together with a number of additional results, including the proof of a previously conjectured stronger bound in the asymptotic limit.
Other measures of the accuracy are examined, as well as other restrictions on the generator of the phase shifts.
We provide expanded numerical results for the minimum error and asymptotic expansions.
The significance of the results claiming violation of the Heisenberg limit is assessed, followed by a detailed discussion of the limitations of the Cram\'er-Rao bound.
\end{abstract}

\pacs{42.50.St, 03.65.Ta, 06.20.Dk}
\maketitle

\section{Introduction}
Phase  estimation  is the basis for much precision measurement.
Optical interferometers offer highly accurate  measurements of length, and atomic phase measurements  provide  highly accurate  measurements of time, as well as other physical quantities like magnetic field \cite{caves81,phasesens,phasecomm,phaseres,WisMil10}.
In optics, most measurements are limited by the shot-noise limit, where
the accuracy  scales as $1/\sqrt{\langle N \rangle}$, where $N$ is the photon number operator.
In contrast, it is normally assumed that the fundamental limit is the Heisenberg limit, where the accuracy scales as $1/\langle N \rangle$ \cite{heislim,zwierz,GLM06}.
This potentially provides far greater accuracy, but is extremely difficult to achieve in practice because it requires highly nonclassical states of light, as well as arbitrarily high efficiencies \cite{dorner,kacprowicz,escher}.
Any amount of loss will cause the scaling to revert to $1/\sqrt{\langle N \rangle}$ for large $\langle N \rangle$ \cite{escher}.

Recently a number of papers suggested that the Heisenberg limit is not the fundamental limit to accuracy, and that a better scaling constant or even a higher power of $\langle N \rangle$ might be possible.
In Ref.~\cite{ani}, Anisimov \textit{et al.}\ gave a proposal for violating the Heisenberg limit by a small amount.
In another work, Zhang \textit{et al.}\ \cite{zhang}  proposed  a scheme offering zero phase uncertainty with finite $\langle N \rangle$. 
Finally, in Ref.~\cite{rivas} Rivas and Luis presented a proposal for obtaining scaling as $1/\langle N \rangle^p$ for $p>1$.
A qualitatively different proposal for violating the Heisenberg limit is that based on nonlinear interferometry \cite{nonlinear,napolit}.
However, that work differs in its use of terminology; it does not violate the Heisenberg limit in the sense we use here  (see Sec.~\ref{sec:nonlin}) \cite{zwierz}.

A common feature of proposals to violate the Heisenberg limit is that they only work for a limited range of phases.
Additional phase information would be needed to confine the phase to within the region where the measurement is accurate.
One can consider first using a sequence of measurements to ensure that the phase lies within a suitable region, 
then using the super-Heisenberg measurement. 
If the \emph{overall} measurement (consisting of the sequence of individual measurements) could yield better accuracy than the Heisenberg limit, then it could be regarded as providing a true improvement.
On the other hand, if the resources required to localise the phase to the required region result in an overall measurement with accuracy that is not better than the Heisenberg limit, then the accuracy of the super-Heisenberg measurements would seem to be illusory.

An analogous situation was seen in considering the reciprocal-peak-likelihood as a measure of uncertainty.
In Ref.~\cite{shapiro,shapiro2} a technique was proposed that would apparently yield super-Heisenberg accuracy in terms of reciprocal-peak-likelihood.
Later work found that, in practice, the proposal resulted in accuracy that was worse than the Heisenberg limit \cite{caves}.
Another example is that of NOON states.
NOON states yield phase information scaling as the Heisenberg limit, but require initial phase information with similar accuracy.
In that case, it is known how to combine measurements from multiple states to obtain an overall measurement that scales at the Heisenberg limit \cite{multi,berry}.

To evaluate whether the super-Heisenberg measurements would be able to yield an overall measurement violating the Heisenberg limit, we examined the case where the mean-square error is averaged over all phase shifts.
We showed that the Heisenberg limit provides a rigorous lower bound to the square root of the average mean-square error ({\ramse}) in such a case \cite{rapid}.
Therefore, no scheme that apparently beats the Heisenberg limit for a small range of phase could be used to construct an overall measurement starting from an unknown phase that beats the Heisenberg limit.
 An alternative approach is to determine the  bound if the initial phase is restricted to a given range.
In Ref.~\cite{Hall12} it was shown that  with such a restriction  the usual Heisenberg limit can be multiplied by a factor proportional to the phase range, and further results have been given in Refs.~\cite{tsang,gm,nair,hwprx}.
An alternative approach has yielded a bound on the average of the error at just two locations \cite{glm}.

The specific result from Ref.~\cite{rapid} is
\begin{equation}
\label{eq:heis0}
\delta\hat{\Phi} \ge \frac k{\langle G +1 \rangle},
\end{equation}
where $\delta\hat{\Phi}$ is the {\ramse}, $G$ is the generator of the phase shifts, which is here assumed to have nonnegative integer eigenvalues, and $k$ is a constant.
These quantities are explained in Sec.~\ref{sec:merit} below.
We have analytically proven that this inequality holds with $ k = k_A:=\sqrt{2\pi/e^3}\approx 0.5593$ \cite{rapid}.
In Sec.~\ref{sec:proof} we give the full proof of that result, as well as a generalised result in terms of the absolute value of $G$ in the case where $G$ also has negative integer eigenvalues.

Numerical calculations indicate that the inequality holds with the larger scaling constant $k = k_C \approx 1.3761$.
We give the detailed numerical results in Sec.~\ref{sec:numer}, indicating that this result holds both for the {\ramse} and the error estimated using the Holevo variance.
In Sec.~\ref{sec:asymp} we calculate asymptotic expansions for the {\ramse}, providing strong analytic support for the scaling constant $k_C$, and proving that $k=k_C$ is valid in the asymptotic limit $\langle G\rangle\rightarrow\infty$.
We examine the scaling with the number of probe states in Sec.~\ref{sec:probe}, then give a detailed discussion of the papers claiming violation of the Heisenberg limit in Sec.~\ref{sec:viol}.
The Cram\'er-Rao bound and the error propagation formula are commonly used in examining the Heisenberg limit, but have some limitations; these are discussed in Sec.~\ref{sec:cr}. 

\section{Figures of merit for average phase resolution}
\label{sec:merit}
There are a number of different figures of merit for  phase measurements. 
Before describing these, we first introduce some notation, 
largely following Ref.~\cite{rapid}.
The random variable for the phase shift of the system is $\Phi$,
and the random variable for the estimate of that phase shift is $\hat{\Phi}$.
The error in the phase estimate is $\Theta = \hat{\Phi}-\Phi$.
We use capital letters for the random variables;
the corresponding values and measurement outcomes are denoted by the corresponding lower case letters ($\phi$, $\hat\phi$, and $\theta$).

We consider a Hilbert space with a phase shift operator $G$.
In the completely general case, the only restriction is that the eigenvalues of $G$ must be integers.
We may also consider the specific case where the eigenvalues are all nonnegative integers, in which case we denote the operator by $N$.
This includes, for example, the case of photon number.
Alternatively, if the eigenvalues include all integers, such as for angular momentum, we use the symbol $J$.

The phase shift is described by the unitary operator $\exp(-iG\Phi)$.
That is, the probe state $\rho_0$ becomes $\rho_\phi:=e^{-iG\phi}\rho_0 e^{iG\phi}$.
The detection method used to estimate $\phi$ is described by a  positive-operator valued measure (POVM) $\{M_{\hat\phi}\}$.
Hence, the probability distribution is given by $p(\hat\phi|\phi)={\rm Tr}(M_{\hat\phi}\rho_\phi )$.
Because phase is only defined modulo $2\pi$, we do not distinguish between $\phi$ and $\phi+2\pi$, or between $\hat\phi$ and $\hat\phi+2\pi$.
This means that $p(\hat\phi|\phi)=p(\hat\phi+2\pi k|\phi)$ for any integer $k$, and $p(\hat\phi|\phi)$ is normalised over  a  (arbitrary) $2\pi$ interval.

\subsection{Root-mean-square error}
\label{sec:rms}
The most common figure of merit for a measurement is the square root of the mean-square error ({\mse}).
We will call this the {\rmse}.
For a specific phase shift, $\phi$, the {\mse} is given by
\begin{equation}
\label{eq:oldv}
 (\Delta_\phi^{\phi_r} \hat{\Phi})^2  := \int_{\phi_r-\pi}^{\phi_r+\pi} d\hat\phi \, (\hat\phi-\phi)^2 p(\hat\phi|\phi).
\end{equation}
There is a subtlety in that for phase, values that differ by $2\pi$ are equivalent, which means that a range of $2\pi$ must be specified for the integral.
However, the reference phase shift, $\phi_r$, is arbitrary, and the value that is obtained for the {\mse} will depend on $\phi_r$.
Ideally $\phi$ should be near the centre of the range.
If it is near one of the bounds of the range then the {\mse} will be unreasonably large.

To solve this problem, it is convenient to take the difference  $\hat\phi-\phi$ modulo $(-\pi,\pi]$. 
That is, we add or subtract a multiple of $2\pi$ such that the value obtained is in the range $(-\pi,\pi]$.
It is important to note that this convention can only decrease the value obtained for the {\mse}.
In this work we are concerned with placing lower bounds on the {\mse}.
We prove these lower bounds for the {\mse} with the difference defined modulo $2\pi$.
Because this {\mse} is no larger than that obtained without taking the difference modulo  $(-\pi,\pi]$, all results hold for that case as well.
 Thus, it is natural to work with the minimum {\mse},  given by
\begin{align}
\label{eq:newv}
(\Delta_\phi \hat{\Phi})^2 &:= \int_{\phi_r-\pi}^{\phi_r+\pi} d\hat\phi \,  \,  \left\{(\hat\phi-\phi) \!\! \mod \!(-\pi,\pi]\right\}^2  p(\hat\phi|\phi) \nn
&= \int_{\phi-\pi}^{\phi+\pi} d\hat\phi \, (\hat\phi-\phi)^2 p(\hat\phi|\phi) ,
\end{align}
where we have used the fact that $p(\hat\phi|\phi)$ repeats modulo $2\pi$. 
 It follows that
\beq
\Delta_\phi\hat{\Phi} \equiv  \Delta_\phi^{\phi} \hat{\Phi}\le\Delta_\phi^{\phi_r} \hat{\Phi}
\eeq
for any reference phase $\phi_r$.

The above $\Delta_\phi \hat{\Phi}$ is a measure of the accuracy of the phase measurement only for a specific phase shift $\phi$.
It is trivial to see that one can always choose a measurement such that the {\mse} can be zero for a specific phase shift, $\phi_0$: 
the trivial measurement that always yields the result $\hat\phi=\phi_0$.
In reality, for a phase measurement the phase shift is unknown; otherwise a measurement would be unnecessary. 
To be useful, a measurement must give accurate results for a range of phase shifts.

A rigorous way of taking account of the range of phase is to average the figure of merit over the phase shift.
For the {\mse} one would use
\begin{equation}
\int_{-\pi}^{\pi} d\phi \, p(\phi) (\Delta_\phi \hat{\Phi})^2, 
\end{equation}
where $p(\phi)$ is a probability distribution describing the prior information about the phase shift.
In this work we consider the case that there is no prior information, so $p(\phi)=1/2\pi$.
Then the average {\mse} ({\amse}) is given by
\begin{equation}
(\delta\hat\Phi)^2 := \frac 1{2\pi}\int_{-\pi}^{\pi} d\phi \, (\Delta_\phi \hat{\Phi})^2.
\end{equation}
One then finds that 
\begin{align}
\label{eq:aveq}
(\delta\hat\Phi)^2 =\int_{-\pi}^{\pi} d\theta \, \theta^2 \,  \bar p(\theta) = \langle \Theta^2 \rangle.
\end{align} 
Here $\bar p(\theta)$ is the probability density for the error in the phase estimate $\Theta=\hat{\Phi}-\Phi$, and is defined by
\begin{equation}
\bar p(\theta) :=  \frac 1{2\pi}\int_{-\pi}^{\pi} d\phi \,  p(\theta+\phi|\phi).
\end{equation}
We call $\delta\hat\Phi$ the {\ramse}, because it is averaged over $\phi$ before taking the square root, whereas the {\rmse} $\Delta_\phi\hat\Phi$ is for a specific $\phi$.

Equation \eqref{eq:aveq} holds because the mean-square error is a linear figure of merit. 
A general figure of merit for the accuracy of a phase estimate $\hat\Phi$ can be defined as a 
functional, $F$, that takes as input a probability density in $\hat\phi$, and outputs a scalar. 
 In the case that $F$ is linear, we find that 
\begin{align}
\label{eq:aveq2}
\int_{-\pi}^{\pi} d\phi \, p(\phi) F(p(\hat\Phi|\phi)) 
= F(\bar p(\Theta)).
\end{align}
 This means that, for linear measures, the average figure of merit and the figure of merit of the average distribution are equivalent.

More generally, consider a convex figure of merit; that is, one that satisfies
\begin{equation}
F(t p_1(\hat\Phi) + (1-t) p_2(\hat\Phi)) \le t F(p_1(\hat\Phi)) + (1-t) F(p_2(\hat\Phi)),
\end{equation}
 for $t\in[0,1]$. 
By using Jensen's inequality, one obtains
\begin{align} \label{eq:gfomap}
\int_{-\pi}^{\pi} d\phi \, p(\phi) F(p(\hat\Phi|\phi)) &\ge F\left( \int_{-\pi}^{\pi} d\phi \, p(\phi) p(\hat\Phi|\phi) \right) \nn
&= F(\bar p(\Theta)).
\end{align}

What this means is that, if the figure of merit is convex, then placing a lower bound on the figure of merit for the average distribution also
provides a lower bound on the average of the figure of merit.
That is the approach we use in this work; we find lower bounds on the figure of merit for the average distribution, which also hold for the average of the figure of merit.

\subsection{Holevo variance and average bias}
\label{sec:hol}
There are alternative measures of the spread which are similar to the {\mse} 
but which are specifically defined for phase. These are typically defined in terms of 
the average of the exponential of the phase, $\langle e^{i\hat\Phi}\rangle$.
In the case that the phase distribution is sharply peaked, then this quantity will be close to 1.
One possibility for quantifying the uncertainty in the phase is $2(1-|\langle e^{i\hat\Phi}\rangle|)$ \cite{Collett}; 
another is $1-|\langle e^{i\hat\Phi}\rangle|^2$ \cite{Opatrny,Forbes}.

A measure of this type with some nice properties is that  proposed by Holevo \cite{holcov}, 
\begin{equation}
V_{H,\phi}(\hat\Phi) := |\langle e^{i\hat\Phi}\rangle_{\phi}|^{-2}-1,
\end{equation}
which has been dubbed the Holevo variance \cite{WisKil97}. 
Here the subscript $\phi$ indicates that the variance is determined for a specific value of the phase shift.
That is,
\begin{equation}
V_{H,\phi}(\hat\Phi) := \left|\left\langle \int_{-\pi}^{\pi} d\hat\phi \, e^{i\hat\phi}p(\hat\phi|\phi)\right\rangle \right|^{-2}-1.
\end{equation}
In this case there is no ambiguity in choosing the bounds of the integral, because the argument is clearly periodic modulo $2\pi$. 

A minor problem with this definition is that it does not penalise biased estimates.
However, this is easily corrected by using the modified definition
\begin{equation}
{\rm Var}_{H,\phi}\hat\Phi := {\rm Re}\langle e^{i(\hat\Phi-\phi)}\rangle_{\phi}^{-2}-1.
\end{equation}
If the measurement is ``\unb'', in the sense that
\begin{equation}
\phi = \arg [\langle e^{i\hat\Phi}\rangle_{\phi}],
\end{equation}
then these two expressions for the Holevo variance are equivalent.

The Holevo variance is a convex functional of the probability distribution.
From  Eq.~\eqref{eq:gfomap},  this means that one can place a lower bound on the average Holevo variance by considering the Holevo variance of the average distribution.
That is,
\begin{equation}
(\delta_H \hat\Phi)^2 := ({\rm Re}\langle e^{i\Theta} \rangle)^{-2}-1
\end{equation}
is a lower bound on the average value of ${\rm Var}_{H,\phi}(\hat\Phi)$.
In this paper we do not discuss the Holevo variance without averaging over $\phi$, so we will refer to $(\delta_H \hat\Phi)^2$ as the Holevo variance.

In the case that the average  distribution $\overline{p}(\theta)$ is  {\unb}, in the sense that $\langle e^{i\Theta} \rangle$ is real and positive,  then 
\begin{equation}
(\delta_H \hat\Phi)^2 = |\langle e^{i\Theta} \rangle|^{-2}-1.
\end{equation}
If the average  distribution  is biased, then it can be modified to obtain a {\unb} measurement.
Taking
$\theta_{\rm av} := \arg\langle e^{i\Theta} \rangle$,
we can replace measurement operators $M_{\hat\Phi}$ with
\begin{equation}
M'_{\hat\Phi} = M_{\hat\Phi+\theta_{\rm av}}.
\end{equation}
Then, for these new measurement operators, $p'(\hat\phi|\phi)=p(\hat\phi+\theta_{\rm av}|\phi)$, so
\begin{align}
\langle e^{i\Theta} \rangle_{M'} &= \frac 1{2\pi} \int_{-\pi}^{\pi} d\hat\phi \int_{-\pi}^{\pi} d\phi  \, e^{i(\hat\phi-\phi)}p(\hat\phi+\theta_{\rm av}|\phi) \nn
&= \frac 1{2\pi} \int_{-\pi}^{\pi} d\hat\phi \int_{-\pi}^{\pi} d\phi  \, e^{i(\hat\phi-\theta_{\rm av}-\phi)}p(\hat\phi|\phi) \nn
&= e^{-i\theta_{\rm av}}\langle e^{i\Theta} \rangle_{M}.
\end{align}
Hence this modification of the measurement yields a {\unb} average measurement.

With this condition, we can bound the mean-square error by using the following inequality, 
\begin{equation} \label{earlier}
|\langle e^{i\Theta} \rangle| = \langle \cos\Theta \rangle \ge \cos\sqrt{\langle \Theta^2\rangle}, 
\end{equation}
where we have used the fact that $\cos\sqrt{x}$ is a convex function, along with Jensen's inequality.
There are alternative ways to bound the mean-square error, but this particular inequality will be useful 
in  Appendix~\ref{sec:tighter}. It also has the nice property that it can be saturated, 
for a probability distribution that is just delta functions at $\pm\sqrt{\langle \Theta^2\rangle}$.
Now consider the limit where the mean-square error $(\delta\hat\Phi)^2=\langle \Theta^2\rangle$ is small.
Expanding as a Maclaurin series in this small parameter, we obtain
\begin{align}
\label{eq:vaholin}
(\delta\hat\Phi)^2 &\ge \left( \arccos \{[(\delta_H \hat\Phi)^2+1]^{-1/2}\} \right)^2 \nn
&= (\delta_H \hat\Phi)^2 - \frac 23 (\delta_H \hat\Phi)^4 + O((\delta_H \hat\Phi)^6).
\end{align}
This means that, except for higher-order terms, the Holevo variance lower bounds the {\amse} from below, 
so asymptotically we have $(\delta_H \hat\Phi)^2 \lesssim (\delta\hat\Phi)^2$. 

 We can also use the Holevo variance to bound the {\amse} from above.
Using  the fact that $\cos\theta\le 1-2\theta^2/\pi^2$ on the interval $[-\pi,\pi]$, we have the inequality
\begin{equation} \label{ineq2}
\langle \cos\Theta \rangle \le 1-\frac{2\langle \Theta^2 \rangle}{\pi^2}.
\end{equation}
Using this, we have  
\begin{align}
(\delta\hat\Phi)^2 &\le \frac{\pi^2}2 \left( 1-[(\delta_H \hat\Phi)^2+1]^{-1/2}\right) \nn
&= \frac{\pi^2}4 (\delta_H \hat\Phi)^2 - \frac{3\pi^2}{16} (\delta_H \hat\Phi)^4 + O((\delta_H \hat\Phi)^6).
\end{align}
 The reason for the factor of $\pi^2/4$ is that even for small variance, the main contribution to the {\amse} 
can be from large phase errors.
The inequality (\ref{ineq2}) is saturated for a distribution that has contributions at $\pm \pi$.  

Returning to the asymptotic lower bound (\ref{eq:vaholin}), its significance is that 
any lower bound on the Holevo variance is also asymptotically a lower bound on the {\amse}. 
In particular, it is known that for canonical phase measurements on a single-mode field there is the tight asymptotic lower bound on the Holevo variance
$\delta_H\hat{\Phi}\gtrsim k_C /\langle N\rangle$ with $k_C  :=2(-z_A/3)^{3/2}\approx 1.3761$ (where $z_A$ is the first zero of the Airy function) \cite{berrythesis,bandilla}.
This is tight in the sense that, asymptotically, the Holevo variance is equal to this value with any difference being of higher order.
Because the Holevo variance is asymptotically a lower bound on the usual {\amse}, we must also asymptotically have the lower bound
$\delta\hat{\Phi}\gtrsim k_C /\langle N\rangle$ for canonical phase measurements on a single-mode field.
It will be shown in Sec.~\ref{sec:proof} that this is in fact a tight lower bound.

\subsection{Entropic length}

Another measure of concentration is the \emph{entropic length} \cite{entvol,Hall00}.
This is given by
\begin{equation} \label{entlength}
L(\hat\Phi) := e^{H(\Theta)},
\end{equation}
where $H(\Theta)$ is the entropy of the error probability density,
\begin{equation}
H(\Theta)= -\int_{-\pi}^{\pi} \bar p(\theta) \ln(\bar p(\theta)) d\theta.
\end{equation}
The entropy takes its largest positive value for a flat distribution, and takes large negative values as the distribution provides more information about the phase.
The negative of the entropy provides a measure of how much information about the phase is available.
The entropic length is correspondingly small for a distribution providing a lot of information about the phase.

Similar to the {\amse} or the Holevo variance, the entropic length will be small for a sharply peaked distribution.
However, in contrast to those measures, the entropic length will also be small if there are multiple sharp peaks, 
with a value roughly equal to the total width of those peaks.
The entropic length satisfies several basic properties expected for a length, discussed in Ref.~\cite{entvol}.
It can also be used to provide a lower bound to the {\ramse} via the relation \cite{Hall00}
\begin{equation} \label{entvar}
\delta \hat{\Phi} \geq (2\pi e)^{-1/2} L(\hat{\Phi}).
\end{equation}
This is because, if one were considering a distribution on the infinite line, the entropy is maximised for fixed $\hat{\Phi}$ by a Gaussian distribution, in which case $\delta \hat{\Phi} = (2\pi e)^{-1/2} L(\hat{\Phi})$.
For the case of phase, we are limited to the interval $[-\pi,\pi]$, which means that the Gaussian distribution cannot be obtained exactly.
Therefore the inequality still holds, but cannot be saturated except asymptotically. 

 In contrast to the other measures considered here, the entropy is \emph{not} convex.
This means that one needs to be cautious when considering the average entropy.
The entropy of the average distribution does not provide a lower bound on the average of the entropies.
We do not determine the lower bound on the average of entropies; this is an open problem. 

 \section{Obtaining universal bounds from nondegenerate bounds} 
\label{sec:proof}
We now present the universal form of the Heisenberg limit, which was first derived in Ref.~\cite{rapid}.
In subsection A we present the theorem showing that bounds which hold for canonical measurements on  nondegenerate systems  also hold for completely arbitrary measurements on general systems.
 In optics a single-mode field is nondegenerate, whereas the general case includes multimode interferometry. 
In subsection B we use this to provide our universal form of the Heisenberg limit. 
In Sec.~IV we will present numerical results indicating that a better scaling constant is possible.

\subsection{Mapping the general problem to a  nondegenerate  problem}

As discussed at the start of Sec.~\ref{sec:merit}, the detection method may be described by a  POVM  $\{M_{\hat\phi}\}$, which gives the probability distribution via
\begin{equation}
p(\hat\phi|\phi) = {\rm Tr}(M_{\hat\phi}\rho_\phi).
\end{equation}
A particularly useful form of  POVM  is a \emph{covariant}  POVM.
Whereas for an arbitrary  POVM  the individual $M_{\hat\phi}$ can be chosen independently of each other (except for the normalisation requirement), for a covariant  POVM  only one measurement operator may be chosen, then all others are related via the generator of shifts.
In particular,
\begin{equation}
\overline{M}_{\hat\phi} =  e^{-iG\hat\phi} \overline{M}_0 e^{iG\hat\phi}.
\end{equation}
For a covariant  POVM, the probability distribution for the error in the estimate is independent of the phase shift.
This may be shown via
\begin{align}
p(\theta+\phi| \phi) &= {\rm Tr}(\overline{M}_{\theta+\phi}\rho_\phi) \nn
&= {\rm Tr}(e^{-iG(\theta+\phi)} \overline{M}_{0}e^{iG(\theta+\phi)}e^{-iG\phi}\rho_0 e^{iG\phi}) \nn
&= {\rm Tr}(e^{-iG\theta} \overline{M}_{0}e^{iG\theta}\rho_0 ) .
\end{align}

A particular form of covariant  POVM  is the \emph{canonical}  POVM.
This can be defined as $\{ e^{-iG\phi}C_0e^{iG\phi}\}$, with \cite{Hall08}
\begin{equation}
\label{eq:can}
C_0 = \frac{1}{2\pi} \sum_d \sum_{n,n'\in S;d\leq D(n),D(n') } |n,d\rangle \langle n',d|.
\end{equation}
Here we have labelled the states with $n$ and $n'$ indicating the eigenvalues of $G$, and $d$ the degeneracy.
The function $D(n)$ gives the degeneracy for eigenvalue $n$.
$S$ denotes the spectrum of eigenvalues of $G$, which we have assumed to be the integers or a  subset thereof.
This definition of a canonical POVM is not unique in general, because it depends on the labelling of the degenerate states; a fact which was not noted in Refs.~\cite{Hall08,rapid},
and which does not affect the results therein.
 However, we will only require the simpler case of no degeneracies in what follows, where for this case the POVM is uniquely given by
$\{ e^{-iG^{(s)}\phi}C_0^{(s)} e^{iG^{(s)}\phi}\}$, with
\begin{equation}
C_0^{(s)} = \frac{1}{2\pi}\sum_{n,n'\in S} |n\rangle \langle n'|.
\end{equation}
We use $G^{(s)}$ to denote a generator with
the same spectrum of eigenvalues as $G$, but nondegenerate.

We now show  
that any average phase distribution, $\bar p(\theta)$, can be obtained by a covariant measurement, 
and that the covariant measurement result can be obtained by a canonical measurement on a system without degeneracy.
In Ref.~\cite{rapid} we obtained this result by a three-step  process: first that any 
average phase distribution, $\bar p(\theta)$, can be obtained by a covariant measurement; second that 
the covariant measurement result can be obtained by a canonical measurement; and third that 
the canonical measurement result can be obtained by a canonical measurement on a system without degeneracy.
Here we simplify the proof by combining  the second two steps.

To express these results it is convenient to modify the notation slightly.
We will use subscripts on the probability $p$ to indicate the  POVM  used.
In addition, we will indicate the state used in the probability.
In the case of the probability for the measurement error $\theta$ for the covariant  POVM, we omit $\phi$, because the probability is independent of $\phi$ as discussed above.
Therefore, we replace $p(\theta+\phi|\phi)$ with $p_{\overline{M}}(\theta|\rho_0)$.

Expressed in terms of this notation, the first result is as follows.

\begin{lemma}
\label{lem1}
For any  POVM  $\{M_{\hat\phi}\}$, there exists a covariant  POVM  $\{\overline{M}_{\hat\phi}\}$ such that for all states $\rho_0$,
\begin{equation} \label{lemrel}
p_{\overline{M}}(\theta|\rho_0) = \bar p_M(\theta|\rho_0).
\end{equation}
\end{lemma}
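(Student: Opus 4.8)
The plan is to produce the covariant POVM directly by \emph{twirling} the given POVM over the phase-shift group $U(1)$. Concretely, I would set
\begin{align}
\overline{M}_0 &:= \frac{1}{2\pi}\int_{-\pi}^{\pi} d\phi\, e^{iG\phi} M_{\phi}\, e^{-iG\phi}, \nn
\overline{M}_{\hat\phi} &:= e^{-iG\hat\phi}\overline{M}_0\, e^{iG\hat\phi},
\end{align}
so that the family $\{\overline{M}_{\hat\phi}\}$ is covariant by construction and does not depend on $\rho_0$. The first thing to verify is that this is a bona fide POVM. Positivity of $\overline{M}_0$, hence of every $\overline{M}_{\hat\phi}$, is immediate because $\overline{M}_0$ is an average (a convex combination, in the continuum sense) of the positive operators $e^{iG\phi}M_\phi e^{-iG\phi}$. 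For the resolution of the identity I would pass to the eigenbasis $\{\ket{n}\}$ of $G$: conjugation by $e^{-iG\hat\phi}$ multiplies the $(n,n')$ matrix element of any operator by $e^{-i(n-n')\hat\phi}$, so, since $n-n'$ is an integer, $\int_{-\pi}^{\pi}d\hat\phi\,\overline{M}_{\hat\phi}$ is diagonal with entries $2\pi\,\bra{n}\overline{M}_0\ket{n}$. Using the normalisation $\int_{-\pi}^{\pi}d\phi\,M_\phi = I$ of the original POVM over a $2\pi$ window gives $\bra{n}\overline{M}_0\ket{n} = \frac{1}{2\pi}\int_{-\pi}^{\pi}d\phi\,\bra{n}M_\phi\ket{n} = \frac{1}{2\pi}$, whence $\int_{-\pi}^{\pi}d\hat\phi\,\overline{M}_{\hat\phi} = I$ as required.

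The second step is to compare the two error distributions. Since $\overline{M}$ is covariant, the identity already established in the text shows its error distribution is $\phi$-independent and equals $p_{\overline{M}}(\theta|\rho_0) = {\rm Tr}(e^{-iG\theta}\overline{M}_0 e^{iG\theta}\rho_0)$. On the other side,
\begin{align}
\bar p_M(\theta|\rho_0) &= \frac{1}{2\pi}\int_{-\pi}^{\pi}d\phi\, {\rm Tr}(M_{\theta+\phi}\,e^{-iG\phi}\rho_0 e^{iG\phi}) \nn
&= \frac{1}{2\pi}\int_{-\pi}^{\pi}d\phi\, {\rm Tr}(e^{iG\phi}M_{\theta+\phi}e^{-iG\phi}\,\rho_0),
\end{align}
and shifting the integration variable $\phi\to\phi-\theta$ turns the integrand into ${\rm Tr}(e^{-iG\theta}e^{iG\phi}M_\phi e^{-iG\phi}e^{iG\theta}\rho_0)$, so the integral collapses to ${\rm Tr}(e^{-iG\theta}\overline{M}_0 e^{iG\theta}\rho_0) = p_{\overline{M}}(\theta|\rho_0)$. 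This holds for every state $\rho_0$, which is exactly the claim.

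I do not expect a genuine obstacle here; the construction and both verifications are short. The two places that need care are (i) the change-of-variables step, where one must remember that $M_{\hat\phi}$, and hence the entire integrand, is periodic modulo $2\pi$, so the integral over the shifted window $[-\pi+\theta,\pi+\theta]$ equals the one over $[-\pi,\pi]$; and (ii) the normalisation argument in infinite dimensions, where one should phrase everything in terms of matrix elements between $G$-eigenstates rather than manipulating the (generally unbounded, not trace-class) operator $\overline{M}_0$ as a whole — this is harmless for a POVM density. Everything else reduces to cyclicity of the trace and linearity of the integral.
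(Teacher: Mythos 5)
Your proposal is correct and follows essentially the same route as the paper: define $\overline{M}_0$ by twirling the given POVM over the phase group and then use cyclicity of the trace plus a $2\pi$-periodic change of integration variable to identify $p_{\overline{M}}(\theta|\rho_0)$ with $\bar p_M(\theta|\rho_0)$. Your additional explicit check that the twirled family is a bona fide POVM (positivity and the resolution of the identity via matrix elements in the $G$-eigenbasis) is a welcome detail that the paper leaves implicit, and it goes through unchanged when degeneracy labels are included.
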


\begin{proof}
This result is well known \cite{holcov}, but we provide a proof here for completeness.
Given  POVM  $\{M_{\hat\phi}\}$, we define the covariant  POVM  via
\begin{equation}
\overline{M}_0 := \frac{1}{2\pi} \int_{-\pi}^{\pi} d\phi \, e^{iG\phi}M_{\phi}e^{-iG\phi}. 
\end{equation}
Then we find
\begin{align}
p_{\overline{M}}(\theta|\rho_0) &= {\rm Tr}(e^{-iG\theta} \overline{M}_{0}e^{iG\theta}\rho_0 ) \nn
&= \frac{1}{2\pi} \int_{-\pi}^{\pi} d\phi \, {\rm Tr}(e^{-iG\theta} e^{iG\phi}M_{\phi}e^{-iG\phi} e^{iG\theta}\rho_0 ) \nn
&= \frac{1}{2\pi}\int_{-\pi}^{\pi} d\phi \, p_{M} (\phi | \phi-\theta) \nn
&= \frac{1}{2\pi}\int_{-\pi}^{\pi} d\phi \, p_{M} (\phi+\theta | \phi) = \bar p_M(\theta|\rho_0).
\end{align}
In the last line we have shifted the variable of integration.
This shows the relation \eqref{lemrel} required.
\end{proof}

The second result, which is a combination of the two steps given in Ref.~\cite{rapid}, is as follows. 

\begin{lemma} \label{lemz}
Given any covariant  POVM  $\{\overline{M}_{\hat\phi}\}$ and state $\rho_0$, there exists a state without degeneracies
$\rho_0^{(s)}$ such that the probability distribution of $G^{(s)}$ for $\rho_0^{(s)}$ is the same as that of $G$ for $\rho_0$, and
\begin{equation}\label{lem2rel}
p_{C^{(s)}}(\theta|\rho_0^{(s)}) = p_{\overline{M}}(\theta|\rho_0).
\end{equation}
\end{lemma}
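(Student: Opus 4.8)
The plan is to write $\rho_0^{(s)}$ down explicitly and then verify, in order, that it is a legitimate density operator, that its $G^{(s)}$ distribution equals the $G$ distribution of $\rho_0$, and that it reproduces the covariant error distribution under the canonical measurement. Let $P_n:=\sum_{d\le D(n)}|n,d\rangle\langle n,d|$ be the projector onto the eigenspace of $G$ with eigenvalue $n$, so that $\sum_{n\in S}P_n=I$ and $e^{\mp iG\phi}=\sum_n e^{\mp in\phi}P_n$; the nondegenerate system carries the orthonormal basis $\{|n\rangle:n\in S\}$ with $G^{(s)}|n\rangle=n|n\rangle$. Matching the Fourier coefficients in $\theta$ of $p_{\overline{M}}(\theta|\rho_0)$ and of $p_{C^{(s)}}(\theta|\rho_0^{(s)})$ then motivates the definition
\begin{equation}
\label{eq:rhos}
\rho_0^{(s)} := 2\pi\sum_{n,n'\in S}{\rm Tr}\!\left(\overline{M}_0\,P_n\,\rho_0\,P_{n'}\right)|n\rangle\langle n'| ,
\end{equation}
which I would adopt and verify.

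The one input from covariance that is needed is obtained by sandwiching the POVM normalisation $\int_{-\pi}^{\pi}d\hat\phi\,\overline{M}_{\hat\phi}=\int_{-\pi}^{\pi}d\hat\phi\,e^{-iG\hat\phi}\overline{M}_0 e^{iG\hat\phi}=I$ between $P_n$ on both sides: using $e^{iG\hat\phi}P_n=e^{in\hat\phi}P_n$ the $\hat\phi$ dependence cancels and one finds $P_n\overline{M}_0 P_n=(2\pi)^{-1}P_n$. Hence the diagonal entries of \eqref{eq:rhos} are $\langle n|\rho_0^{(s)}|n\rangle=2\pi\,{\rm Tr}(P_n\overline{M}_0 P_n\rho_0)={\rm Tr}(P_n\rho_0)$, which is precisely the probability of outcome $n$ for $G$ in $\rho_0$; summing over $n$ gives ${\rm Tr}\,\rho_0^{(s)}={\rm Tr}\,\rho_0=1$. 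So, granting positivity, $\rho_0^{(s)}$ is a state with the required $G^{(s)}$ distribution.

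For positivity, take an arbitrary $|\chi\rangle=\sum_n\chi_n|n\rangle$ and set $R:=\sum_n\chi_n P_n$ (an operator on the original, possibly degenerate, space), so that $\sum_n\overline{\chi_n}P_n=R^{\dagger}$. Then
\begin{equation}
\langle\chi|\rho_0^{(s)}|\chi\rangle = 2\pi\,{\rm Tr}\!\left(\overline{M}_0\,R^{\dagger}\rho_0 R\right)\ge 0 ,
\end{equation}
because $\overline{M}_0\ge0$, $R^{\dagger}\rho_0 R\ge0$, and the trace of a product of two positive operators is nonnegative. It remains to match the error distributions: with $C_0^{(s)}=(2\pi)^{-1}\sum_{n,n'}|n\rangle\langle n'|$ one has $e^{-iG^{(s)}\theta}C_0^{(s)}e^{iG^{(s)}\theta}=(2\pi)^{-1}\sum_{n,n'}e^{-i(n-n')\theta}|n\rangle\langle n'|$, so that $p_{C^{(s)}}(\theta|\rho_0^{(s)})=\sum_{n,n'}e^{-i(n-n')\theta}\,{\rm Tr}(\overline{M}_0 P_{n'}\rho_0 P_n)$; substituting $e^{\mp iG\theta}=\sum_n e^{\mp in\theta}P_n$ into $p_{\overline{M}}(\theta|\rho_0)={\rm Tr}(e^{-iG\theta}\overline{M}_0 e^{iG\theta}\rho_0)$ produces the identical sum, which is \eqref{lem2rel}.

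The only step I expect to be delicate is positivity of \eqref{eq:rhos}: since its matrix elements are dictated by the Fourier-matching requirement, Hermiticity and the correct marginal come essentially for free, but there is no a priori reason the resulting operator should be positive semidefinite --- the key is to recognise $\langle\chi|\rho_0^{(s)}|\chi\rangle$ as $2\pi\,{\rm Tr}(\overline{M}_0\,R^{\dagger}\rho_0 R)$, a trace of two positive operators. Passing straight from the covariant seed on the degenerate system to the nondegenerate state \eqref{eq:rhos} is exactly what fuses the earlier ``covariant $\to$ canonical'' and ``canonical $\to$ nondegenerate'' steps, avoiding construction of the intermediate degenerate canonical state.
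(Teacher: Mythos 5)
Your construction is the same as the paper's: your $\rho_0^{(s)}=2\pi\sum_{n,n'}{\rm Tr}(\overline{M}_0 P_n\rho_0 P_{n'})\,|n\rangle\langle n'|$ is exactly the operator defined in the paper's proof written in projector form, the use of the covariant-POVM normalisation to get $P_n\overline{M}_0P_n=(2\pi)^{-1}P_n$ (hence the correct $G^{(s)}$ distribution and unit trace) matches Appendix A, and your Fourier/trace verification of $p_{C^{(s)}}(\theta|\rho_0^{(s)})=p_{\overline{M}}(\theta|\rho_0)$ is the same computation. Your positivity argument via $\langle\chi|\rho_0^{(s)}|\chi\rangle=2\pi\,{\rm Tr}(\overline{M}_0 R^{\dagger}\rho_0 R)\ge 0$ is just a basis-free repackaging of the paper's $\sum_{\lambda,\mu}|X_{\lambda,\mu}|^2\ge 0$ step, so the proof is correct and essentially identical in approach.
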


\begin{proof}
Choose the state without degeneracies via
\begin{align}
\rho_0^{(s)} &= 2\pi  \sum_{n,n'\in S}\ket{n'}\bra{n} \sum_{d\leq D(n),d'\leq D(n')} \langle n',d'|\rho_0|n,d\rangle \nn
& \quad \times \langle n,d|\overline{M}_0|n',d'\rangle .
\end{align}
Note that $\rho_0^{(s)}$ is indeed a density operator,  and yields the same distribution for $G^{(s)}$ as $\rho_0$ does for $G$.
The details for how to prove these facts are given in Appendix \ref{sec:lem2det}.

It can be shown that the average phase distributions  are also the same, via
\begin{widetext}
\begin{align}
p_{C^{(s)}}(\theta|\rho_0^{(s)}) &= {\rm Tr} (e^{-iG^{(s)}\theta}C_0^{(s)} e^{iG^{(s)}\theta}\rho_0^{(s)}) \nn
&=  \frac{1}{2\pi}{\rm Tr} \left(e^{-iG^{(s)}\theta}\sum_{m,m'\in S} |m\rangle \langle m'| e^{iG^{(s)}\theta} 2\pi  \sum_{n,n'\in S}\ket{n'}\bra{n} \sum_{d\leq D(n),d'\leq D(n')} \langle n',d'|\rho_0|n,d\rangle \times \langle n,d|\overline{M}_0|n',d'\rangle\right) \nn
&= \sum_{n,n'\in S} e^{i(n'-n)\theta} \sum_{d\leq D(n),d'\leq D(n')} \langle n',d'|\rho_0|n,d\rangle \langle n,d|\overline{M}_0|n',d'\rangle \nn
&= \sum_{n,n'\in S}  \sum_{d\leq D(n),d'\leq D(n')} \langle n',d'|\rho_0|n,d\rangle \langle n,d|e^{-iG\theta}\overline{M}_0e^{iG\theta}|n',d'\rangle\nn
&= \sum_{n'\in S}  \sum_{d'\leq D(n')} \langle n',d'|\rho_0 e^{-iG\theta}\overline{M}_0e^{iG\theta}|n',d'\rangle
= {\rm Tr}(e^{-iG\theta} \overline{M}_{0}e^{iG\theta}\rho_0 ) = p_{\overline{M}}(\theta|\rho_0).
\end{align}
\end{widetext}
This shows the relation \eqref{lem2rel} required.
\end{proof}

 Using these lemmas then enables us to prove our theorem that the average distribution can always be obtained by a canonical measurement on a system without degeneracies. 

\begin{theorem} \label{mainthm}
Any bound on the concentration of the canonical phase distribution of a nondegenerate system with shift generator $G^{(s)}$, under some constraint ${\cal C}$ on the distribution of $G^{(s)}$, is also a bound on the concentration of the average phase distribution $\overline{p}(\theta)$ of an arbitrary phase estimate for any shift generator $G$ having the same eigenvalue spectrum as $G^{(s)}$, providing that the probe state satisfies the same constraint ${\cal C}$ with respect to the distribution of $G$.
\end{theorem}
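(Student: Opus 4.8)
The plan is to chain together the two lemmas just established. Begin with an arbitrary probe state $\rho_0$ and an arbitrary POVM $\{M_{\hat\phi}\}$ on the general (possibly degenerate) system with generator $G$, and let ${\cal C}$ be any constraint that $\rho_0$ satisfies with respect to the distribution of $G$ (for instance a bound on $\langle G\rangle$, or on $\langle |G|\rangle$ in the case where $G$ has negative eigenvalues). The goal is to exhibit a nondegenerate system with generator $G^{(s)}$, together with a canonical measurement on it and a state satisfying ${\cal C}$, that reproduces the average error distribution $\bar p_M(\theta|\rho_0)$ exactly. The claimed bound then follows at once, because any measure of concentration of the error distribution depends only on $\bar p(\theta)$ itself, so equality of distributions forces equality of the concentration measures.

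First I would invoke Lemma~\ref{lem1} to replace $\{M_{\hat\phi}\}$ by a covariant POVM $\{\overline{M}_{\hat\phi}\}$ with $p_{\overline{M}}(\theta|\rho_0)=\bar p_M(\theta|\rho_0)$. Next I would apply Lemma~\ref{lemz} to this covariant POVM, obtaining a nondegenerate state $\rho_0^{(s)}$ for $G^{(s)}$ such that $p_{C^{(s)}}(\theta|\rho_0^{(s)})=p_{\overline{M}}(\theta|\rho_0)$. Concatenating the two equalities gives $p_{C^{(s)}}(\theta|\rho_0^{(s)})=\bar p_M(\theta|\rho_0)$, so the canonical phase distribution of the nondegenerate probe $\rho_0^{(s)}$ coincides with the average phase distribution of the original arbitrary measurement on the arbitrary probe.

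The one point that needs care is that the constraint must be inherited by $\rho_0^{(s)}$. This is precisely what the extra clause of Lemma~\ref{lemz} supplies: the distribution of $G^{(s)}$ for $\rho_0^{(s)}$ is identical to the distribution of $G$ for $\rho_0$. Since ${\cal C}$ is, by hypothesis, a constraint on that distribution alone, and $\rho_0$ satisfies it with respect to $G$, the state $\rho_0^{(s)}$ satisfies ${\cal C}$ with respect to $G^{(s)}$. Hence any bound on the concentration of canonical phase distributions that holds for nondegenerate systems subject to ${\cal C}$ applies to $p_{C^{(s)}}(\theta|\rho_0^{(s)})$, and therefore to $\bar p_M(\theta|\rho_0)$. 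As $\rho_0$ and $\{M_{\hat\phi}\}$ were arbitrary, this proves the theorem.

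I expect no genuine obstacle within this argument itself, which is just a composition of the two lemmas; the real work lies upstream, in the proof of Lemma~\ref{lemz} — in particular verifying that the operator $\rho_0^{(s)}$ exhibited there is a legitimate density operator (positive semidefinite, unit trace) and genuinely reproduces the $G$-distribution — a verification that is relegated to the appendix. Modulo that, the theorem follows in a few lines.
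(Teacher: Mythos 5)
Your proposal is correct and follows essentially the same route as the paper: chain Lemma~\ref{lem1} and Lemma~\ref{lemz} to reproduce $\bar p_M(\theta|\rho_0)$ as the canonical phase distribution of a nondegenerate state, note that the constraint ${\cal C}$ is inherited because Lemma~\ref{lemz} preserves the generator's eigenvalue distribution, and conclude that any bound for the nondegenerate canonical case transfers to the arbitrary measurement.
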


\noindent
\textit{Remarks.} A measure of the concentration is a functional of the probability distribution, and includes the mean-square error, the Holevo variance, and the entropic length.
 For measures that are convex, such as the mean-square error and Holevo variance, lower bounds on the measure for the average distribution provide lower bounds on the average of that measure (see Sec.~\ref{sec:rms}). 
By the distribution of $G$, we mean the probability distribution for the eigenvalues of $G$.
Examples of constraints on the distribution of $G$ are a fixed mean $\langle G \rangle$, an upper bound on the eigenvalues, or a fixed mean absolute value $\langle |G| \rangle$.

\begin{proof}
Consider any state $\rho_0$ that satisfies the constraint ${\cal C}$ on the distribution of $G$.
Given an arbitrary measurement described by a  POVM  $\{M_{\hat\phi}\}$, we obtain an average phase distribution $\bar p(\theta)$.
Using Lemma~\ref{lem1}, we find that there exists a covariant  POVM  $\{\overline{M}_{\hat\phi}\}$ such that the same probability distribution is obtained with the same state, $\rho_0$.
Next, using Lemma~\ref{lemz}, there exists a state without degeneracy, $\rho_0^{(s)}$, such that the nondegenerate canonical measurement on $\rho_0^{(s)}$ produces the same phase distribution, and the distribution of $G^{(s)}$ is the same as the distribution of $G$ for $\rho_0$.

Therefore, the distribution of $G^{(s)}$ for $\rho_0^{(s)}$ still satisfies the same constraint ${\cal C}$.
Furthermore, because the probability distribution for the canonical measurement $p_{C^{(s)}}(\theta|\rho_0^{(s)})$ is equal to the average phase distribution $\bar p_M(\theta|\rho_0)$, any measure of the concentration of the probability distribution is unchanged.
Because any value of the concentration that can be obtained for the arbitrary measurement under constraint ${\cal C}$ can also be obtained for the concentration of the canonical phase distribution under the same constraint, the arbitrary measurement must satisfy the same bound as the canonical measurement.
\end{proof}

\subsection{Analytic bounds via an entropic uncertainty relation}
\label{sec:anlyt}
It is possible to obtain a number of bounds by using entropic uncertainty relations.
The entropic uncertainty relation for canonical phase measurements and a nondegenerate shift generator $G$ is given by \cite{jmodopt,bbm}
\begin{equation}
\label{eq:entin}
H(\Theta)+H(G) \ge \ln 2\pi .
\end{equation}
This can then be used to obtain bounds on the {\ramse} \cite{rapid}.
 In particular, combining Eqs.~(\ref{entlength}), (\ref{entvar}) and (\ref{eq:entin}) yields
\begin{equation} \label{entbound}
\delta\hat\Phi \geq (2\pi e)^{-1/2} e^{H(\Theta)} \geq (2\pi/ e)^{1/2} e^{-H(G)}.
\end{equation}

We first specialise to the case where the eigenvalue spectrum $S$ includes all nonnegative integers, so we denote the generator by $N$.
The entropy for fixed mean number is maximised for the thermal (negative exponential) distribution.
By a straightforward calculation, one can show that this results in the inequality
\begin{equation}
H(N) \le \ln \langle N+1 \rangle + \langle N \rangle \ln (1+1/\langle N \rangle).
\end{equation}
Because $x \ln (1+1/x)<1$, this yields (for finite expectation values)
\begin{equation}
H(N) < \ln \langle N+1 \rangle +1.
\end{equation}

 Substitution into Eq.~(\ref{entbound}) then gives
\begin{equation}
\label{eq:heis}
\delta\hat\Phi > \frac{k_A}{\langle N+1 \rangle},
\end{equation}
where $k_A=\sqrt{2\pi/e^3}\approx 0.5593$ (defined in the Introduction).
Using Theorem~\ref{mainthm}, this result holds for the  {\ramse} for all possible phase measurements, and for any shift generator with nonnegative integer eigenvalues. 
 Recall that, because the {\mse} is a linear measure, the {\rmse} of the average distribution is equivalent to the {\ramse} [see Eqs.~\eqref{eq:aveq} and \eqref{eq:aveq2}]. 

We can also use this result to infer the result in the more general case where there is some lower bound $g$ on the eigenvalues of $G$.
Then we can take $G=g\openone+N$, so $\langle N\rangle = \langle G-g \rangle$.
Then one obtains
\begin{equation}
\delta\hat\Phi > \frac{k_A}{\langle G-g+1 \rangle}.
\end{equation}
Note that, for this result, it is not necessary for the spectrum $S$ to include all integers above $g$.
This is because, in minimising $\delta\hat\Phi$ for given $\langle G-g \rangle$, removing some integers restricts the possible states, and therefore can only increase the {\amse}.

An alternative restriction that one may wish to consider is, instead of a fixed mean, a fixed  mean of the absolute value, $\langle | G | \rangle$.
This is of particular interest in the case of angular momentum, where $G=J$.
Then fixed $\langle | J | \rangle$ corresponds to a mean absolute value of the angular momentum.
The maximum entropy for fixed $\langle | G-g | \rangle$,  where $g$ is any real number,  can be obtained by finding 
a critical point of the variational quantity
\begin{equation}\label{Lambda}
\Lambda = -\sum_{n\in S} p_n \ln p_n -\alpha\sum_{n\in S} p_n - \beta \sum_{n\in S} |n-g|p_n ,
\end{equation}
where $\alpha$ and $\beta$ are variational parameters. 
 As shown in Appendix \ref{detail}, this yields
\begin{equation} \label{eq:nonint}
H(G) < \ln (2\langle|G-g| \rangle+1) +1.
\end{equation}
Substitution in Eq.~(\ref{entbound}) then gives 
\begin{equation}
\label{eq:abheis}
\delta\hat\Phi > \frac{k_A}{\langle 2|G-g|+1 \rangle}.
\end{equation}
Once again we note that this result holds both when $S$ includes all integers, so $G=J$, and when $S$ does not include all integers.
In the latter cases, the maximum entropy distribution can not be obtained exactly, but it still provides a bound.

For a given state, one can adjust the value of $g$ in order to maximise this lower bound.
The optimal value is the median; that is, the value such that there is equal probability for eigenvalues above and below $g$.

Another restriction on the distribution that can be considered is a finite range of eigenvalues.
For example, with number we have a minimum eigenvalue of $0$, and can place an upper bound of $n_{\rm max}$ on the eigenvalues.
Then the entropy is bound as
\begin{equation}
H(G) \le \ln(n_{\rm max} +1),
\end{equation}
because the maximum entropy is for the flat distribution.
 Then, combining with \eqref{entbound} gives 
\begin{equation}
\delta\hat\Phi > \frac{\sqrt{2\pi/e}}{n_{\rm max}+1}.
\end{equation}
In the specific case of the Holevo variance, there is a well-known result for canonical measurements \cite{luis,WisKil97},
\begin{equation}
\delta_H \hat\Phi  \geq  \tan \left(\frac{\pi}{n_{\rm max}+2}\right).
\end{equation}
This result is achievable for arbitrary $n_{\rm max}$.
Using our Theorem, this result also holds for the average distribution for arbitrary measurements.
Furthermore, because the Holevo variance is a convex functional of the probability distribution,
this bound holds for the root-mean value of the Holevo variance (averaging over phase shifts).
 
\section{Optimal bounds via numerical calculations}
\label{sec:numer}
The bound in Eq.~\eqref{eq:heis} has a scaling constant of $k_A=\sqrt{2\pi/e^3}\approx 0.5593$.
In contrast, based on the asymptotic result for Holevo variance \cite{berrythesis,bandilla}, we expect $\delta\hat{\Phi}\gtrsim k_C /\langle N\rangle$ with $k_C =2(-z_A/3)^{3/2}\approx 1.3761$ for large $\langle N\rangle$, where $z_A$ is defined in Sec.~\ref{sec:hol}.
This indicates that the scaling constant of the bound in Eq.~\eqref{eq:heis} is not optimal, and suggests the conjecture \cite{rapid}
\begin{equation}
\label{eq:better}
\delta\hat\Phi > \frac{k_C}{\langle N+1 \rangle}.
\end{equation} 
In order to test  this conjecture, we solved  the variational problem to find the minimum
value of the {\ramse} or Holevo variance as a function of $\langle N \rangle$.
 The results supporting this conjecture are given in this section. 
In Sec.~\ref{sec:asymp} the conjecture is proved analytically for the special case of the asymptotic limit $\langle N\rangle\rightarrow\infty$.

\subsection{Holevo variance} 
The case of the Holevo variance is simplest, because the problem is to maximise $|\langle e^{i\Theta} \rangle|$.
Given a state
\begin{equation}
\ket{\psi} = \sum_{n=0}^{\infty} \psi_n \ket{n},
\end{equation}
we have
\begin{equation}
\label{eq:simhol}
|\langle e^{i\Theta} \rangle| = \left| \sum_{n=0}^{\infty} \psi_{n+1}\psi^*_{n} \right| .
\end{equation}
Note that we can upper bound this expression via
\begin{equation}
 \left| \sum_{n=0}^{\infty} \psi_{n+1}\psi^*_{n} \right|  \le  \sum_{n=0}^{\infty} |\psi_{n+1}\psi_{n}| .
\end{equation}
The normalisation and $\langle N \rangle$ are unaffected by replacing the coefficients $\psi_{n}$ with their absolute values.
Therefore, for maximisation of $|\langle e^{i\Theta} \rangle|$, we can always take $\psi_{n}$ to be real and nonnegative.

From the above, the variational problem is thus to find a critical point of
\begin{equation}
\Lambda = \sum_{n=0}^{\infty} \left( \psi_n\psi_{n+1} - \alpha \psi_n^2 -\beta n \psi_n^2\right) ,
\end{equation}
where $\alpha$ and $\beta$ correspond to normalisation and mean photon number constraints. 
The variational condition $\partial \Lambda /\partial \psi_n = 0$ leads directly to the eigenvalue equation
 \begin{equation} \label{rr}
 \psi_{n-1}+\psi_{n+1} = 2(\alpha +\beta n)\psi_n
 \end{equation} 
for $n\ge 1$, and $\psi_{1} = 2(\alpha +\beta n)\psi_0$.
To avoid the need to specify a different equation for $n=0$, we can simply define $\psi_{-1}:=0$.

\subsection{Root-mean-square error} 

The problem for minimising the {\ramse} is somewhat more difficult, because we do not have a simple expression like Eq.~\eqref{eq:simhol}.
However, any well-behaved function (i.e., satisfying the Dirichlet conditions) can be expanded in a Fourier series on the interval $[-\pi,\pi]$ as
\begin{equation}
f(\theta) = \sum_{m=-\infty}^{\infty} z_m e^{im\theta}.
\end{equation}
For $m\ge 0$, the expectation values of the exponentials are given by
\begin{equation}
\langle e^{im\Theta} \rangle = \sum_{n=0}^{\infty} \psi_{n+m}\psi^*_{n} .
\end{equation}
For $m<0$, the expectation values are just the complex conjugate of those for positive $m$.

Unlike the case of the Holevo variance, it is not obvious at first sight that we can take the state coefficients to be real.  However, if $f(\theta)$ is real, and symmetric about $\theta=0$, then $z_{-m}=z_{m}^*=z_m$.
Therefore the expectation value of $f(\Theta)$ is given by
\begin{equation}
\langle f(\Theta) \rangle =  \sum_{m,n=0}^{\infty} \psi_{m}^* Z_{mn}\psi_{n} 
\end{equation}
where $Z$ is the real symmetric matrix with coefficients $Z_{mn}:=z_{|m-n|}$.

The variational problem is then to find a critical point of
\begin{align}
&\Lambda = \langle f(\Theta) \rangle - \alpha -\beta \langle N \rangle \nn
&= \sum_{m,n=0}^\infty \psi_m^* \left[ Z_{mn} - (\alpha + \beta n)\delta_{mn}\right]\psi_n .
\end{align}
The variational condition leads to
\begin{equation}
\sum_{n=0}^{\infty} Z_{mn}   \psi_{n} = (\alpha+\beta m)\psi_{m}.
\end{equation}
This equation is solved as an eigenvalue equation with $\alpha$ as the eigenvalue.
Because the corresponding matrix $Z-\beta N$ is real and symmetric in the number state basis, the eigenvectors are real in this basis (up to a global phase factor).
This means that the state coefficients can indeed be taken to be real.

In the specific case of $f(\theta)=\theta^2$, the Fourier series is
\begin{equation}
\label{eq:four}
\theta^2 = \frac{\pi^2}{3}+ 4\sum_{m=1}^{\infty} \frac{(-1)^m}{m^2} \cos(m\theta).
\end{equation}
We then obtain the eigenvalue equation
\begin{equation}
\left(\frac{\pi^2}{3}-\beta m\right) \psi_m+ \mathop{\sum_{n=-m}^{\infty}}_{n\ne 0} \frac{2 (-1)^n}{n^2}  \psi_{n} = \alpha\, \psi_{m}.
\end{equation} 
Numerical solution of this eigenvalue equation is difficult, because there are an infinite number of Fourier coefficients.
The problem can be truncated at some maximum number, but solution still requires finding the eigenvalues of a full matrix.
In contrast, the problem for the Holevo variance is sparse, and can therefore be solved much more efficiently.

\subsection{Bounding the {\amse}} 
As we are interested in testing a lower bound on the {\amse}, we can alternatively use an expression with a finite number of Fourier coefficients, but that forms a lower bound on $\theta^2$. 
One alternative is to use $f_1(\theta):=2(1-\cos\theta)$, which is the same optimisation problem as for the Holevo variance.
To show $f_1(\theta)\le \theta^2$ on $[-\pi,\pi]$, we can use a Taylor expansion to third order with the Lagrange form of the remainder
\begin{equation}
f_1(\theta) = \theta^2 + \frac{f_1^{(3)}(\xi)}{3!}\theta^3 = \theta^2 - \frac 13 \theta^3\sin\xi ,
\end{equation}
where  $\xi\in[0,\theta]$. 
Because $\sin\xi$ has the same sign as $\theta^3$, the remainder term is negative, and $f_1(\theta)\le \theta^2$.

The drawback to this alternative is that it yields results that do not satisfy the conjectured lower bound.
We therefore use a higher-order approximation given by
\begin{equation}
f_2(\theta) := \frac 52 - \frac 83 \cos\theta + \frac 16 \cos2\theta.
\end{equation}
Again expanding in a Taylor series,
\begin{align}
f_2(\theta) &= \theta^2+\frac{f_2^{(3)}(\xi)}{3!}\theta^3 = \theta^2 - \frac 83 \theta^3 (1-\cos\xi)\sin\xi \nn
&\le \theta^2.
\end{align}
We can also obtain an upper bound using (see Appendix \ref{sec:tighter})
\begin{align}
f_3(\theta)&:= (\pi^2/4-1)[ 2(1-\cos\theta) - (1-\cos2\theta)/2 ]
\nn &\quad +2(1-\cos\theta).
\end{align}
In the following we will use $(\delta_m \hat{\Phi})^2 := \langle f_m(\Theta) \rangle$ for $m\in\{1,2,3\}$.

\subsection{Numerical results} 

The minimal Holevo variance, as well as the minimal values of $\langle \Theta^2 \rangle$ and $\delta_2\hat{\Phi}$, have been determined by numerically solving the eigenvalue equations.
In each case, a number cutoff was used that was about $10$ times the value of $\langle N \rangle$, or 100 for small $\langle N\rangle$.
At this point the magnitude of the state coefficients had fallen to less than $1/10^6$ of  the  maximum value, and increasing the cutoff beyond this did not alter the results by more than $1$ part in $10^6$.
For the results for $\langle \Theta^2 \rangle$, the maximum $\langle N \rangle$ was about $5000$, due to the difficulty in finding eigenvalues of a full matrix.
In contrast, for the Holevo variance and for $\delta_2\hat{\Phi}$, the maximum $\langle N \rangle$ was over $10^6$.

\begin{figure}[!t]
\centering
\includegraphics[width=0.47\textwidth]{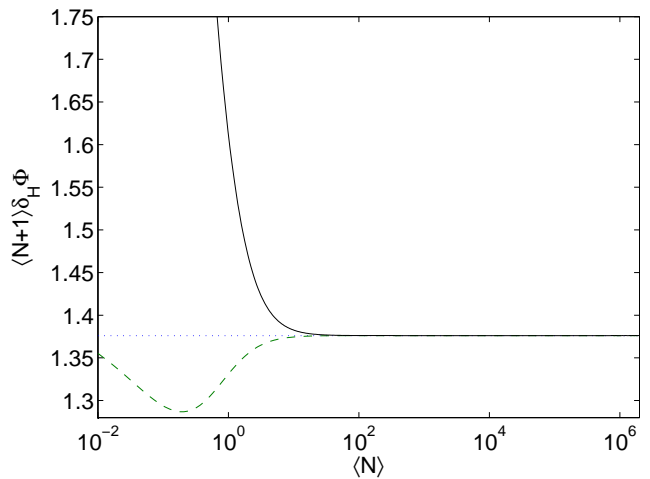}
\caption{Minimum possible value of $\langle N+1\rangle \delta_H \hat{\Phi}$, plotted as a function of $\langle N\rangle$ (solid curve).
The case where $\delta_1\Phi$ is used instead of $\delta_H \hat{\Phi}$ is shown as the dashed curve (green).
The asymptotic value of $ k_C \approx 1.3761$ is shown as the horizontal dotted line (blue).}
\label{fig:numerical}
\end{figure}

\begin{figure}[!t]
\centering
\includegraphics[width=0.47\textwidth]{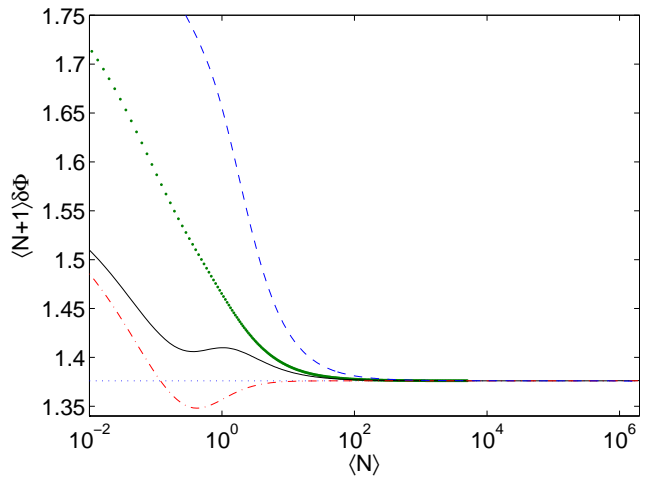}
\caption{Minimum possible value of $\langle N+1\rangle \delta \hat{\Phi}$, plotted as a function of $\langle N\rangle$.
The dotted curve (green) shows the values obtained for $\delta \hat{\Phi}$ (i.e., the {\ramse}).
The solid curve (black) uses $\delta_2\hat{\Phi}$ instead of $\delta \hat{\Phi}$.
The dash-dotted curve (red) is the lower bound using $\arccos[1-(\delta_1\hat{\Phi})^2/2]$.
The dashed curve (blue) is the upper bound using $\delta_3\hat{\Phi}$ calculated for the state that minimises $\delta_1\hat{\Phi}$.
The asymptotic value of $ k_C \approx 1.3761$ is again shown as the horizontal dotted line (blue).}
\label{fig:numerical2}
\end{figure}

The results for the Holevo variance are given in Fig.~\ref{fig:numerical}.
In this figure the square root of the Holevo variance is plotted multiplied by $\langle N+1 \rangle$.
Therefore, if $k_C/\langle N+1 \rangle$ provides a lower bound to the {\ramse}, the curve should 
be above $k_C$ (also shown in the figures).
It is clear from the figure that the numerical results indicate that $k_C/\langle N+1 \rangle$ provides a strict lower bound to $\delta_H \hat{\Phi}$.
In this figure $\delta_1\hat{\Phi}$ is also shown, and  $\delta_1 \hat{\Phi}< k_C/\langle N+1 \rangle$ in the range shown. 

The results calculated for $\delta \hat{\Phi}$ are shown in Fig.~\ref{fig:numerical2}.
It can be seen that these results are also above the line for $k_C$, indicating that $\delta \hat{\Phi}> k_C/\langle N+1 \rangle$.
One would like to provide more easily calculated lower bounds on $\delta \hat{\Phi}$ to test this inequality more thoroughly.
It is clear that $\delta_1\hat{\Phi}$ is not useful for this  purpose,  because the curve in Fig.~\ref{fig:numerical} is below $k_C$.
It is also possible to obtain a tighter lower bound on $\delta\hat{\Phi}$ using $\delta_1\hat{\Phi}$ (see Appendix \ref{sec:tighter}), but this curve is still not above $k_C$ for all $\langle N\rangle$.

A better lower bound to $\delta \hat{\Phi}$ is $\delta_2\hat{\Phi}$, which is also shown in Fig.~\ref{fig:numerical2}, and is above the $k_C$ line.
This quantity can be calculated more rapidly and reliably than $\delta \hat{\Phi}$, and results are given  up  to $\langle N \rangle \approx 2\times 10^6$.
This provides further numerical evidence that $\delta \hat{\Phi}> k_C/\langle N+1 \rangle$.
Results were also calculated for $\langle N \rangle$ down to about $10^{-6}$.
These are not shown in the figures, but the curves that are above $k_C$ do not cross below $k_C$.

\subsection{Angular momentum calculations} 
We have also calculated the corresponding results with a fixed value of $\langle |J| \rangle$.
The variational problem is exactly the same as before, except now we sum over positive and negative values of $j$ (as opposed to $n$), and replace $n$ with $|j|$.
That is, the variational problem is to find a critical point of
\begin{equation}
\Lambda = \langle f(\Theta) \rangle - \alpha -\beta \langle |J| \rangle.
\end{equation}
As before, for a real function $f$ symmetric about zero we can assume that the state coefficients are real, so the variational condition yields
\begin{equation}
\sum_{m=-\infty}^{\infty} a_m  \psi_{j+m} = (\alpha+\beta j)\psi_{j}.
\end{equation}
In the case of $f(\theta)=\theta^2$, we obtain the eigenvalue equation
\begin{equation}
\frac{\pi^2}{3}\psi_j+ \mathop{\sum_{m=-\infty}^{\infty}}_{m\ne 0} \frac{2 (-1)^m}{m^2} \psi_{j+m} = (\alpha+\beta |j|)\psi_{j}.
\end{equation}
The eigenvalue equation for the case of $f_1(\theta)$ is
\begin{equation} \label{rr2}
 \psi_{j-1}+\psi_{j+1} = 2(\alpha +\beta |j|)\psi_j .
\end{equation}
We will not consider $f_2(\theta)$ for this problem.

The results for $\delta\Phi$, $\delta_H\Phi$, and $\delta_1\Phi$ were all determined numerically, and the results are shown in Fig.~\ref{fig:figlbi}.
It will be shown in the next section that the asymptotic optimal value for $\delta_1\hat{\Phi}$ is
\begin{equation}
\delta_1\hat{\Phi} \sim \frac{k'_C}{\langle 2|J|+1 \rangle},
\end{equation}
with $k'_C = 4(-z'_A/3)^{3/2}\approx 0.7916$, where $z'_A$ is the first zero of the derivative of the Airy function.
We have therefore plotted the results for $\delta \hat{\Phi}$ multiplied by $\langle 2|J|+1\rangle$ in Fig.~\ref{fig:figlbi}.
It can be seen in this figure that all the results are above $k'_C$, supporting the conjecture that there is strict inequality with the scaling constant $k'_C$.

\begin{figure}[!t]
\centering
\includegraphics[width=0.47\textwidth]{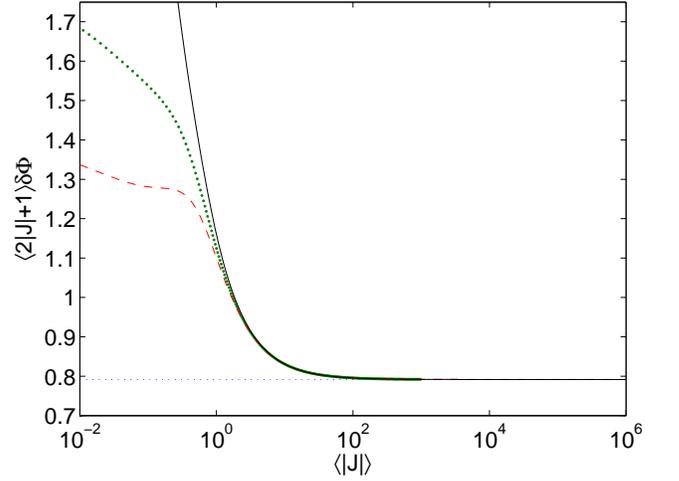}
\caption{Minimum possible value of $\langle 2|J|+1\rangle \delta \hat{\Phi}$ is plotted as a function of $\langle N\rangle$ as the dotted curve (green).
The minimum value of $\langle 2|J|+1\rangle \delta_H \hat{\Phi}$ is shown as the solid curve (black), and $\langle 2|J|+1\rangle \delta_1 \hat{\Phi}$ is 
shown as the dashed curve (red).
The asymptotic value of $ k'_C \approx 0.7916$ is shown as the horizontal dotted line (blue).}
\label{fig:figlbi}
\end{figure}

\section{Asymptotic expansions}
\label{sec:asymp}
\subsection{Holevo variance}  \label{sec:asymp-HV}
In the specific case of the Holevo variance, it is possible to obtain analytic results in terms of Bessel functions to provide further support to the conjecture that $\delta_H \hat{\Phi}> k_C/\langle N+1 \rangle$.
The recurrence relation \eqref{rr} has a known solution in terms of Bessel functions \cite{bandilla}.
Bessel functions of the first kind satisfy the recurrence relation $J_{k-1}(z) +J_{k+1}(z)=(2k/z)J_k(z)$.
Therefore the solution is of the form
\begin{equation}
\psi_n(x,z) = A J_{x+n+1} (z),
\end{equation}
with $x:=\alpha/\beta-1$, $z:=1/\beta$.
Bessel functions of the second kind can be ignored, because they diverge for large values of the order.
The condition that $\psi_{-1}=0$ implies the restriction
\begin{equation} \label{zero}
J_x(z) = 0
\end{equation}
on the parameter $z$, thus confining its allowed values to the (countable) set of zeroes of $J_x$. 

To obtain the smallest Holevo variance for a given mean photon number, we wish to take the solution for the \emph{largest} value of $\alpha$.  This corresponds to the largest solution of Eq.\ \eqref{zero} in terms of $x$ for given $z$.  Conversely, for given $x$ we want the first positive zero of $J_x$.
The normalisation constraint yields
\begin{equation}
 A^{-2} = \sum_{n=0}^\infty [J_{x+n+1}(z)]^2 = \sum_{k=1}^\infty [J_{x+k}(z)]^2 ,
\end{equation}
and hence one has
\begin{align}
\langle N\rangle &=A^2 \sum_{n=0}^\infty n\,[J_{x+n+1}(z)]^2 = A^2\sum_{k=1}^\infty (k-1)[J_{x+k}(z)]^2\nn
\label{nbar}
 &=\frac{\sum_{k=1}^\infty k\,[J_{x+k}(z)]^2 }{ \sum_{k=1}^\infty [J_{x+k}(z)]^2} -1 , \\
 \label{sharp}
\langle e^{i\Theta}\rangle &= A^2 \sum_{n=0}^\infty J_{x+n+1}(z)\,J_{x+n+2}(z) \nn
&= \frac{\sum_{k=1}^\infty J_{x+k}(z)\,J_{x+k+1}(z) }{\sum_{k=1}^\infty [J_{x+k}(z)]^2}.
\end{align}

Using Eq.\ \eqref{rr}, we have
\begin{equation}
\label{eigsol}
\langle e^{i\Theta}\rangle = (\alpha+\beta\langle N \rangle) = (x+\langle N \rangle + 1)/z.
\end{equation}
Up until this point, these results for the Bessel functions are the same as those of Ref.~\cite{bandilla}.
Reference \cite{bandilla} then uses an approximation in terms of Airy functions.
 We have determined more accurate results using formulae for sums of Bessel functions (see Appendix \ref{asymp}).
We find that
\begin{equation}
\label{eq:series}
| \langle e^{i\Theta}\rangle|^{-2}-1 = \sum_{k=1}^5 \frac{b_{2k}}{\langle N +1\rangle ^{2k}} + O\left( \frac 1{\langle N +1\rangle ^{12}} \right),
\end{equation}
where $b_2 = k_C^2$, and $b_2$ to $b_{10}$ are all positive and close to 2.
The fact that each $b_j$ that has been calculated is positive strongly supports  the conjecture that the Holevo variance is strictly lower bounded by the first term.

\subsection{Upper bounding the optimal mean-square error}  \label{sec:asymp-{\mse}}
It would be desirable to obtain a similar approximation for the exact {\ramse} $\delta\Phi$.
However, the eigenvalue equation does not have any solution in terms of elementary functions that we have been able to find.
Even the lower bounding quantity $\delta_2\Phi$ yields an eigenvalue equation that does not appear to have an analytic solution.
However, we can place an upper bound on the optimal value of $\delta\Phi$, using
\begin{align}
\label{eq:inva}
\theta^2 \le f_3(\theta),
\end{align}
for $\theta\in [-\pi,\pi]$.
We can calculate $\delta_3\hat{\Phi}$, except for the state that minimises $\delta_1\hat{\Phi}$.
This value is shown in Fig.~\ref{fig:numerical2} for comparison with $\delta\hat{\Phi}$.

Using the properties of Bessel functions, this leads to the result that the optimal value of $\delta\hat{\Phi}$ satisfies (see Appendix \ref{asymp})
\begin{equation}
\label{eq:upbnd}
(\delta\hat{\Phi})^2 \le \frac{k_C^2}{\langle N+1 \rangle^2} + O\left(\frac 1{\langle N+1 \rangle^3}\right).
\end{equation}
This means that, asymptotically, the optimal value of $\delta\hat{\Phi}$ cannot be larger than $k_C/\langle N+1 \rangle$.
Because $\delta\hat{\Phi}$ cannot be smaller than $\delta_H\hat{\Phi}$ except for higher-order terms [see Eq.~\eqref{eq:vaholin}], 
this means the optimal
$\delta\hat{\Phi}$ must be asymptotically equal to $k_C/\langle N+1 \rangle$ [i.e., 
$k_C$ is the largest value of $k$ for which Eq.~\eqref{eq:heis0} can be true].

\subsection{Angular momentum calculations}  \label{sec:asymp-AM}
Next we consider the problem with fixed $\langle |J|\rangle$.
Recall that the variational problem yields an eigenvalue problem given in Eq.~\eqref{rr2}.
This is solved by taking \cite{dariano}
\begin{equation}
\psi_j(x,z) = A_1 \,J_{x+j}(z) ,
\end{equation}
for $j\ge 0$, and
\begin{equation}
\psi_j(x,z) = A_2 \,J_{x-j}(z) ,
\end{equation}
for $j\le 0$.
In this case we take $x:=\alpha/\beta$, $z:=1/\beta$.
We again may ignore Bessel functions of the second kind, because they diverge.
The restriction that the solutions coincide for $n=0$ means that $A_1=A_2=A$, and
\begin{equation}
\psi_j(x,z) = A \,J_{x+|j|}(z),
\end{equation}
for all $j$.
The condition that the recurrence relation holds for $j=0$ means that
\begin{equation}
J_{x+1}(z) = \frac xz J_{x}(z) = \frac 12 [J_{x-1}(z)+J_{x+1}(z)].
\end{equation}
This implies
\begin{equation}
J_{x-1}(z) - J_{x+1}(z)=0.
\end{equation}
Then, using $[J_{x-1}(z) - J_{x+1}(z)]/2=J'_{x}(z)$, this means we must have $J'_{x}(z)=0$.

Performing series expansions similar to that for the first case, gives (see Appendix \ref{asymp2})
\begin{equation}
\label{eq:spinser}
2(1-| \langle e^{i\Theta}\rangle|) = \sum_{k=2}^{9} \frac{d_{k}}{\langle 2|J|+1 \rangle ^{k}}  + O\left( \frac 1{\langle 2|J|+1 \rangle^{10}} \right),
\end{equation}
where $d_2 = {k'_C}^2$, and  coefficients  up to $d_5$ are positive, but $d_6$ is negative.
This  strongly supports the numerical results that the strict inequality
\begin{equation}
\delta_1\hat{\Phi} \ge \frac{k'_C}{\langle 2|J|+1 \rangle},
\end{equation}
holds.
In turn, because $\delta\hat{\Phi} \ge \delta_1\hat{\Phi}$, this also supports the conjecture that the inequality holds for $\delta\hat{\Phi}$.
In addition, $\delta_H\hat{\Phi} \ge \delta_1\hat{\Phi}$, so this supports the conjecture that the inequality holds for $\delta_H\hat{\Phi}$.

Similarly to the case for fixed $\langle N\rangle$, one can use Eq.~\eqref{eq:inva} to find a series expansion for an upper bound on 
the optimal value of $\delta\hat{\Phi}$, giving
\begin{equation}
(\delta\hat{\Phi})^2 \le \frac {{k'_C}^2}{\langle 2|J|+1 \rangle^2}+ O\left(\frac 1{\langle 2|J|+1 \rangle^3}\right).
\end{equation}
This means that we have upper and lower bounds on the optimal $\delta\hat{\Phi}$, showing that it is asymptotically equal to $k'_C/\langle 2|J|+1 \rangle$.

\section{Scaling with number of probe states}
\label{sec:probe}
Another question is what the scaling of the lower bound is if there are $m$ identical probe states.
Normally it is expected that the {\mse} will scale like $1/\sqrt{m}$ if there are $m$ copies of the state.
This is because, for estimates formed by the average of the individual estimates, the standard error scales as $1/\sqrt{m}$.
Similarly, the Cram\'er-Rao bound for $m$ identical probe states yields the following bound for  estimates  that are unbiased  
(in the standard statistical sense, not what we have called {\unb} in Sec.~\ref{sec:hol}) \cite{Helstrom,holcov}:
\begin{equation}
\label{eq:cram}
\Delta_\phi \hat{\Phi} \ge \frac{1}{2\sqrt{m} \Delta N}.
\end{equation}
We will call this the Helstrom-Holevo bound.

Because of these results one might expect that one could derive a lower bound 
to the uncertainty in terms of $\langle N\rangle$ of the form $k/(\sqrt{m}\langle N +1\rangle)$.  
On the other hand, directly using the above methods yields a lower bound of
\begin{equation}
\label{mNbound}
\delta \hat{\Phi} \ge \frac{k}{\langle mN +1\rangle},
\end{equation}
because the overall average number is $m\langle N \rangle$.
Recall that we have proven this inequality for $k=k_A$, and have extremely strong 
numerical evidence for the inequality for $k=k_C$.

We can prove that there is no lower bound scaling as $1/(\sqrt{m}\langle N +1\rangle)$ in the following way.
Let $m\in\mathbb{N}$, $\langle N \rangle$, and $\delta>0$ be given.
We use $\mu$ for the required value of $\langle N \rangle$, to avoid confusion with intermediate states we use in this discussion with different values of $\langle N \rangle$.

Let $\ket{\chi_{n-1}}$ be the state with the minimum phase uncertainty for mean number $\langle N \rangle=n-1$,
and let $\ket{\chi'_{n-1}}$ be the corresponding state with the same amplitudes, but shifted up by one.
This means that there is no vacuum component, the phase uncertainty is unchanged, and $\langle N \rangle=n$.
We are considering small $\delta$ and large $m$, so we expect that $\mu^\delta \le m$.
In that case, we take $n=(m\mu)^{1/(1+\delta)}$, and consider $m$ copies of
\begin{equation}
\ket{\psi} = \sqrt{1-\mu/n}\ket{0}+ \sqrt{\mu/n}\ket{\chi'_{n-1}}.
\end{equation}
For this state, $\langle N \rangle = \mu$.
Now consider a phase measurement that first distinguishes between $\ket 0$ and $\ket{\chi'_{n-1}}$ on all copies of the state.
If the $\ket{\chi'_{n-1}}$ result is found, then a canonical phase measurement is performed.

The probability of getting the $\ket{\chi'_{n-1}}$ result is $\mu/n$.
For $m$ repetitions, the probability of projecting every single copy onto the state $\ket{0}$ is $(1-\mu/n)^m\le \exp(-m\mu/n) = \exp(-(m\mu)^{\delta/(1+\delta)})$.
This probability scales exponentially in $m\mu$ and may be ignored for asymptotically large $m\mu$. 
The phase uncertainty is therefore (up to an exponentially small correction) no more than that for $\ket{\chi'_{n-1}}$, which is
\begin{align}
\delta\hat{\Phi} &= k_C/n + O(1/n^2) \nn
&= k_C/(m\mu)^{1/(1+\delta)} + O(1/(m\mu)^{2/(1+\delta)}).
\end{align}

For $\mu^\delta > m$, we can just take $n=\mu$, and $\ket{\psi} = \ket{\chi'_{n-1}}$.
In this case we have $1/(m\mu)^{1/(1+\delta)}\ge 1/\mu$.
Therefore, considering just the uncertainty for a single copy of the state gives
\begin{align}
\delta\hat{\Phi} &= k_C/\mu + O(1/\mu^2) \nn
&< k_C/(m\mu)^{1/(1+\delta)} + O(1/(m\mu)^{2/(1+\delta)}).
\end{align}
This provides an upper bound to the uncertainty for $m$ copies of the state.

Therefore, we find that, for any $\delta>0$, $m\in\mathbb{N}$ and $\mu=\langle N \rangle$, we can find a state such that
the uncertainty is no greater than $k_C/(m\mu)^{1/(1+\delta)}$ to leading order.
Because we can choose any $\delta>0$, this means that, for fixed $\langle N \rangle$, the lower bound to the scaling must be arbitrarily close to $1/m$.

This result is counterintuitive, because for a state that does not depend on $m$, 
the uncertainty can be expected to scale as $1/\sqrt{m}$, similarly to the Helstrom-Holevo bound \eqref{eq:cram}. 
However, the Helstom-Holevo bound, in terms of $m$ and  $\Delta N$, 
holds even for states that depend on $m$.
Similarly, a bound in terms of $m$ and $\langle N \rangle$ must hold for states that are chosen based on $m$.
We have shown that the potential dependence of states upon $m$ means that it is not possible to obtain 
a universal bound that scales as $1/\sqrt{m}$ for given $\langle N \rangle$. 

\section{Papers claiming violation of Heisenberg limit}
\label{sec:viol}
In the following, we present some recent measurement schemes claiming violation of the Heisenberg limit.
We summarise the techniques used in these schemes, and explain why they appear to violate the Heisenberg limit.
We argue that the accuracy of these super-Heisenberg measurements should be considered illusory, primarily because they only work for a very restricted range of phase.

\subsection{Anisimov et al.}
Anisimov {\it et al.} \cite{ani} describe a noncovariant phase estimation method having a minimum {\rmse}
 \begin{equation} \label{ani}
\Delta^0_\phi \hat{\Phi} = \frac{1}{[\langle N\rangle(\langle N\rangle +2)]^{1/2}}  .
 \end{equation}
This quantity is for a particular phase shift, as opposed to the average over the phase shift, $\delta\hat{\Phi}$.
Also, the {\rmse} is here using a reference phase of $0$, rather than the reference phase of $\phi$ that we use (see Sec.~II~A). 
This result violates an alternative definition of the Heisenberg limit, given by Anisimov {\it et al.} as \cite{ani}
\begin{equation}
\label{eq:oldHeis}
\Delta^0_\phi \hat{\Phi} \geq 1/\langle N\rangle .
\end{equation}
First, it should be noted that this does not give a different power of $\langle N\rangle$, and does not change the scaling constant.
It only violates this form of the Heisenberg limit by an amount which is significant for small $\langle N\rangle$, and is of higher order for large $\langle N\rangle$.
In later work \cite{ani2}, they have modified their claim to that of achieving the Heisenberg limit.

In fact, it is easy to see that the above form (\ref{eq:oldHeis}) of the Heisenberg limit cannot be a strict limit for small $\langle N\rangle$.
For any $\langle N\rangle$ less than $1/\pi$ it must be violated, because the maximum {\rmse} possible is $\pi^2$.
For the same reason, any bound of the form $k/\langle N\rangle$ cannot hold for all $\langle N \rangle$.
It is for this reason that we have used $\langle N +1\rangle$ (or $\langle G+1 \rangle$ more generally).

Note from Eq.~(\ref{ani}) that the minimum {\rmse} satisfies
\begin{equation}
\Delta^0_\phi \hat{\Phi} >  \frac{1}{[\langle N\rangle(\langle N\rangle +2) +1]^{1/2}}  = \frac{1}{\langle N+1\rangle} .
\end{equation}
Hence, the {\ramse}, $\delta\hat{\Phi}$, trivially satisfies our analytical lower bound \eqref{eq:heis}.
However, the minimum value is below our conjectured best possible bound \eqref{eq:better}, by a factor of $k_C\approx 1.3761$ 
in the asymptotic limit. 
This does not contradict our conjectured bound, because the conjectured bound is for $\delta\hat{\Phi}$, whereas the above value is for a specific value of the phase shift.
This is an important aspect of our result.
It is possible to obtain smaller errors for specific values of the phase shift \cite{Hall12,tsang,gm,nair,hwprx}, but not when the average is taken over the phase.
That is, the noncovariance of the scheme in Ref.~\cite{ani} does allow beating our conjectured bound, in a small range of phase shifts about $\phi=0$, but only at the expense of worse phase resolution over the remainder of possible phase shift values.  

\subsection{Zhang et al.}

Zhang {\it et al.}~\cite{zhang} propose a superposition state with arbitrarily high phase sensitivity but finite average photon number.
They consider a two-mode Mach-Zehnder interferometer (MZI) system, with a probe state of the form
\begin{align} \label{zhang}
|\psi\rangle &:= \sum_{n\geq 1} c_n |\psi_n\rangle, \nn
|\psi_n\rangle &:= \frac{1}{\sqrt{2}}\left[ |n,0\rangle + |0,n\rangle\right] .
\end{align}
That is, $|\psi\rangle$ is a superposition of (mutually orthogonal) NOON states.

They use the quantum Cram\'{e}r-Rao bound (QCRB) to derive the ultimate limit to the uncertainty of phase measurement as
\begin{equation} \label{MSHL}
\Delta_\phi^0 \hat{\Phi} \ge \frac{1}{\sqrt{\langle N^2\rangle}} ,
\end{equation}
in contrast to Eq.~\eqref{eq:oldHeis}.
Also $N=N_a+N_b$ is the total photon number operator for the two modes $a$ and $b$,
rather than just the number operator  $N_a$  for the mode passing through the phase shift.  
They call  Eq.~(\ref{MSHL})  the ``proper'' Heisenberg limit, and Eq.~\eqref{eq:oldHeis} the ``generally accepted form'' of the Heisenberg limit.
This result is similar to the result given in a number of other works \cite{Hofmann,Hyllus}.
By choosing $c_n\propto n^{-3/2}$, they obtain $\langle N\rangle < \infty$ and $\langle N^2\rangle=\infty$, which gives $\Delta^0_\phi \hat{\Phi} \ge 0$.
They further claim in Sec.~V of Ref.~\cite{zhang} that this lower bound is achievable (i.e., that the uncertainty can be zero for finite $\langle N\rangle$).

An interesting feature of their result is that the Fisher information can be infinite for finite $\langle N\rangle$.
Therefore, it should not be expected that the QCRB can give a nontrivial lower bound on the uncertainty for fixed $\langle N\rangle$.
Furthermore, the Fisher information is infinite for all $\phi$.

However, there are some problems with the result presented.
First, they give no proof that the lower bound provided by the QCRB is achievable.
In many cases Fisher's theorem \cite{Fisher} allows the QCRB to be achieved asymptotically (i.e., with a scaling constant of $1/\sqrt{m}$ for $m$ probe states).
However, Fisher's theorem is not universally applicable, because it requires a unique maximally likely estimate \cite{durkin}.
In contrast, here the measurements will yield multiple maximally likely estimates.

Second, the form of the QCRB given is for unbiased measurements, but it is unclear how to perform an unbiased measurement here.
For biased measurements this lower bound does not hold.
In fact, the obvious measurement technique is biased, and will  only  yield zero error for $\phi=0$ and $\pi$, similar to the example in Sec.~\ref{sec:ex}.
This can be achieved with a very simple choice of state.

However, measurements that yield zero error  only  for isolated values of $\phi$ will not be useful.
 Further,  based on the results presented here,
the {\it average} performance of any two-mode MZI estimate must satisfy 
\begin{equation}
\delta \hat{\Phi} \geq \frac{k_A}{\langle N_a+1 \rangle},
\end{equation}
as a  consequence of Eq.~(\ref{eq:heis}).

\subsection{Rivas and Luis}
Rivas and Luis \cite{rivas} consider a linear phase estimation procedure that employs as the probe state the coherent superposition
\begin{equation}
|\psi\rangle = \mu |0\rangle + \nu |\xi\rangle
\end{equation}
of the vacuum $|0\rangle$ and a squeezed state $|\xi\rangle$. The authors consider the case with $\nu \ll 1$, $\mu \simeq 1$ and also assume that the phase shift is known to be small: $\phi \ll 1$. The fixed mean photon number of the probe state is then given by
\begin{equation}
\langle N \rangle = \nu^2 \bar{n}_{\xi}\, ,
\end{equation}
where $\bar{n}_{\xi}$ is the  (average)  number of photons in the squeezed state. 
Using conventional error propagation arguments, they find for this state 
\begin{equation}
(\Delta_\phi^0 \hat{\Phi})^2 \geq \frac{\nu^2}{4m\langle N \rangle^2},
\end{equation}
where $m$ is the number of repetitions of the measurement.
The lower bound here is arbitrarily 
below the usual Heisenberg limit by a factor $O(\nu^2)$. 

We note that similar results can be  obtained in a simplified scenario by employing the probe state
\begin{equation}
|\psi\rangle = \mu |0\rangle + \nu |\bar{n}_{\xi}\rangle\, ,
\end{equation}
where $|\bar{n}_{\xi}\rangle$ denotes a number state with $\bar{n}_{\xi}$ photons.
The interference fringes obtained from this state are high frequency, but low visibility.
A calculation using the error propagation formula based on the observable $X = |0\rangle\langle \bar{n}_{\xi}| + |\bar{n}_{\xi}\rangle\langle 0|$ yields
\begin{equation}
(\Delta_\phi^0 \hat{\Phi})^2 = \frac{(\delta X)^2}{|d\langle X \rangle/d \phi|^2} \approx \frac{\nu^2}{4 \langle N \rangle^2}\, .
\end{equation}
Taking $\nu \propto \langle N \rangle ^{1-p}$ gives
\begin{equation}
\Delta_\phi^0 \hat{\Phi} \propto \frac 1 { \langle N \rangle^{p}},
\end{equation}
which, in principle, gives an accuracy that scales arbitrarily well with $\langle N \rangle$ (for large $p$).

The problem with this  scheme  is that the high accuracy predicted by the error propagation formula is given by high frequency fringes with low visibility.
It would take a great deal of additional phase information to resolve the ambiguity in the fringes,
as well as many repetitions of the measurement to obtain a reasonable estimate of the observable $X$ so that the error propagation formula would become accurate.

The scheme presented in Ref.~\cite{rivas} is a little more complicated (including an analysis of the efficiency), but similar considerations apply.
A quadrature measurement is considered for a fixed phase, which means that the analysis essentially gives an estimate of the uncertainty for a given value of the phase.
As we have noted above, it is possible to obtain higher accuracy for a particular value of the phase shift.
For example, it is trivial to design a measurement that gives zero error for a single value of the phase shift.
The bound \eqref{eq:heis} must hold when averaging over the phase shift.

\subsection{Nonlinear interferometry}
\label{sec:nonlin}
A qualitatively different type of proposal for beating the Heisenberg limit is that based on nonlinear interferometry \cite{nonlinear,napolit}.
The basis of these proposals is that the generator of the phase shifts is nonlinear in the number operator.
For example, $G=N^q$ for some $q>1$.
It is then found that the phase uncertainty can scale as $1/\langle N \rangle^q$.
Subtleties involved in achieving such scalings are discussed in Ref.~\cite{hwprx}.

These proposals do not contradict the results presented here; they are just using the terminology differently \cite{zwierz}.
In Refs.~\cite{nonlinear,napolit}, the Heisenberg limit is given as $1/\langle N \rangle$, where $N$ is the number of 
particles. 
In contrast, here we give the Heisenberg limit in terms of the generator of the phase shifts.
That is, the bound is
\begin{align} \label{nonlinHL}
\delta \hat{\Phi} &\ge \frac {k}{\langle G+1\rangle} = \frac {k}{\langle N^q+1\rangle},
\end{align}
which  typically  scales as ${k}/{\langle N \rangle ^q}$.
Therefore the results do not violate the Heisenberg limit (\ref{nonlinHL}) given here.
In Refs.~\cite{nonlinear,napolit}, they call this limit the ``quantum limit'', rather than the Heisenberg limit.

\section{Limitations of the Cram\'er-Rao bound}
\label{sec:cr}
The Cram\'er-Rao bound for the {\rmse} $\Delta_\phi^0 \hat{\Phi}$ is often used as motivation for the Heisenberg limit, but it has limitations which mean that it does not provide a rigorous basis for the Heisenberg limit.
There are a number of different variations of the way the Cram\'er-Rao bound is used.
First, the classical Cram\'er-Rao bound (CRB), $1/\sqrt{mF_C(\phi)}$, is in terms of the classical Fisher information $F_C(\phi)$ of a specific probability distribution, so in quantum mechanics it is calculated for a given state and measurement.

Second, the \emph{quantum} Cram\'er-Rao bound (QCRB) replaces $F_C(\phi)$ by the quantum Fisher information, $F_Q(\phi)$ (corresponding to the classical Fisher information optimised over all quantum measurements), but  is still calculated for a given state \cite{qcrb}.
Third, the Helstrom-Holevo bound (HHB), as in Eq.~\eqref{eq:cram}, is optimised over both the quantum measurement and the quantum state, with the optimisation being for a given $\Delta N$.
Because these bounds use successively more optimisation, one has the ordering $CRB \geq QCRB \geq HHB$.
In particular, for any estimate that is {\it unbiased} for phase shift $\phi$, one has
\begin{equation} \label{chain}
\Delta_\phi^0 \hat{\Phi} \geq \frac{1}{\sqrt{mF_C(\phi)}} \geq \frac{1}{\sqrt{mF_Q(\phi)}}  \geq \frac{1}{2\sqrt{m}\Delta N} .
\end{equation}

The most obvious limitation in using the HHB is that it is a limit in terms of $\Delta N$, whereas the Heisenberg limit is in terms of $\langle N \rangle$.
This means that, for states with large uncertainty in $N$ as compared to the mean value, the HHB does not imply the Heisenberg limit.
This is taken advantage of in Refs.~\cite{zhang,rivas}.

A fixed value of $\Delta N$ is just a choice of constraint.
One could also consider optimisation for fixed $\langle N \rangle$, as a method to obtain the Heisenberg limit.
However, it is easily seen that there is no upper bound on the Fisher information for a given $\langle N \rangle$.
In the example of Zhang \textit{et al.}, they find a state with infinite Fisher information for finite $\langle N \rangle$ (see Sec.~VII~B).
 Note also that if there were such a bound, then the CRB would imply a $1/\sqrt{m}$ scaling for fixed $\langle N \rangle$, whereas we have found that such scaling is impossible (see Sec.~\ref{sec:probe}). 
The difficulty of using the CRB was also noted in Ref.~\cite{glm}.

\subsection{ Bias in phase estimation}

Another major factor that needs to be taken into account when considering the CRB and related bounds is that of bias. 
Note, for example, that the value of $\Delta N$ in Eq.~(\ref{chain}) can be arbitrarily small, whereas the {\rmse} cannot be larger than $\pi$.
It follows that any phase estimate must be biased for sufficiently small $\Delta N$.
In fact, one can show that covariant phase measurements cannot be unbiased for every phase shift value, in the sense needed for the QCRB and HHB.

 In particular, when considering the {\rmse} with reference phase $\phi_r$, $\Delta_\phi^{\phi_r} \hat{\Phi}$, one needs to define the bias function
\begin{equation} \label{bphi}
b_{\phi_r}(\phi):= \langle\hat\Phi\rangle_\phi^{\phi_r}-\phi,
\end{equation}
with
\begin{equation}
\langle\hat\Phi\rangle_\phi^{\phi_r}:= \int_{\phi_r-\pi}^{\phi_r+\pi}  d \hat{\phi} \, \hat{\phi} \, p( \hat{\phi} | \phi) .
\end{equation}
Then the CRB with bias is \cite{biasedcrb}
\begin{equation} \label{bias}
(\Delta_\phi^{\phi_r} \hat{\Phi})^2 \geq \frac{[1+b'_{\phi_r}(\phi)]^2}{mF_C(\phi) } +b_{\phi_r}(\phi)^2 .
\end{equation}
The QCRB and HHB in Eq.~(\ref{chain}) similarly generalise (see also \cite{Hel67}).

If one is to use the form of the CRB \emph{without} bias, then one needs $b_{\phi_r}(\phi)=0$ and $b'_{\phi_r}(\phi)=0$.
This is highly problematic if one is to consider the full range of values of $\phi$ with a fixed reference phase $\phi_r$.
This is because $\langle\hat\Phi\rangle_\phi^{\phi_r}$ would need to change discontinuously at $\phi=\phi_r+\pi$.
But, for finite $\langle N \rangle$, it is easily shown that $\langle\hat\Phi\rangle_\phi^{\phi_r}$ is a continuous function of $\phi$ (see Appendix \ref{sec:cont}).
Therefore it is not possible for the phase to be globally unbiased unless $\langle N \rangle$ is infinite.
Moreover, there must be a region of size scaling as $1/\langle N \rangle$ where the measurement is biased
[this follows from Eq.~\eqref{eq:cont}].

On the other hand, one can consider applying the CRB   to $\Delta_\phi \hat{\Phi}$ in Eq.~(\ref{eq:newv}); that is, to the {\rmse} modulo $(-\pi,\pi]$. 
Because $\Delta_\phi \hat{\Phi} \le \Delta_\phi^{0} \hat{\Phi}$  (see Sec.~II~A),  using the CRB to bound $\Delta_\phi \hat{\Phi}$ also yields a bound on $\Delta_\phi^{0} \hat{\Phi}$.
Also, because $\Delta_\phi \hat{\Phi} \equiv \Delta_\phi^{\phi} \hat{\Phi}$, the conditions for the measurement to be unbiased become
$b_\phi(\phi)=0$ and $b'_\phi(\phi)=0$.
It is important to note that $b'_\phi(\phi)$ is \emph{not} the same as $\frac{d}{d\phi} b_\phi(\phi)$.
In fact, the restriction $\frac{d}{d\phi} b_\phi(\phi)=0$ implies
\begin{align}
0 &= \left. \frac{d}{d\phi_r} b_{\phi_r}(\phi) \right|_{\phi_r = \phi} + b'_\phi(\phi) \nn
&= 2\pi p(\phi+\pi|\phi)+ b'_\phi(\phi).
\end{align}
That is, if $b_\phi(\phi)=0$, then $\frac{d}{d\phi} b_\phi(\phi)$ will automatically be zero, but $b'_\phi(\phi)$ will only be zero if $p(\phi+\pi|\phi)=0$.
In fact, for $b_\phi(\phi)=0$, the condition $b'_\phi(\phi)=0$ is equivalent to $p(\phi+\pi|\phi)=0$.

The conditions for the measurement to be unbiased (when applying the CRB to the {\rmse} modulo $(-\pi,\pi]$) can therefore be given as $b_\phi(\phi)=0$ and $p(\phi+\pi|\phi)=0$.
The condition $b_\phi(\phi)=0$ can be satisfied relatively easily, because it will be satisfied whenever the probability distribution for the error in the phase estimate is symmetric, so $p(\phi+\theta|\phi)=p(\phi-\theta|\phi)$.
However, it is not possible to satisfy $p(\phi+\pi|\phi)=0$ for all $\phi$ when $\langle N \rangle$ is small.
This is also the parameter regime where the HHB without bias must break down, because it would predict an impossibly large uncertainty. 

Hence, the bias of a given estimate is crucial in any application of the Cram\'{e}r-Rao bound to the RMSE. This is in strong contrast to the Heisenberg-type bounds for the RAMSE derived in this paper, which are independent of the bias function. 

\subsection{Asymptotic achievability}
It is often stated that the CRB (and QCRB and HHB) is asymptotically achievable in the limit of many probe states, without any further qualification.  However, for example, it is important to note from Eq.~(\ref{bias}) that, in the asymptotic limit $m\rightarrow\infty$, the RMSE does not approach zero for a biased estimate --- it is always bounded below by  $|b_{\phi_r}(\phi)|$.   

Furthermore,  Fisher's theorem that Eq.~(\ref{bias}) is itself asymptotically achievable, as $m\rightarrow\infty$,  does not hold in all cases of physical interest \cite{durkin}.
In particular, this theorem assumes that there is a unique maximally likely estimate \cite{Fisher}.  However, 
this is not the case for many states considered in quantum phase estimation, including the 
NOON states as per Eq.~(\ref{zhang}), which are the states that minimize the QCRB. The reason is of course that 
there is nothing to distinguish phase shifts modulo $2\pi/n$, regardless of the number of samples, 
unless the phase shift is in fact already known to this accuracy. That is, there are $n$ maximally likely estimates, 
so Fisher's theorem does not apply. 

In contrast, the above qualifications do not apply to the Heisenberg-type bounds for the RAMSE derived in this paper, which are independent of the bias of the estimate, and which are asymptotically achievable in the sense described in Sec.~VI.

\subsection{Example}
\label{sec:ex}
There are  obvious phase estimates that are not unbiased, where the {\rmse} obtained is qualitatively different
from what would be expected from the Cram\'er-Rao bound without correcting for bias.
Consider a simple measurement with a single photon in a MZI, in the state
\begin{equation}
\rho_\phi = \frac 12 (\ket{0}\bra{0}+\ket{1}\bra{1}) + \frac v2  (e^{i\phi}\ket{0}\bra{1}+e^{-i\phi}\ket{1}\bra{0}),
\end{equation}
with visibility $v<1$.
The photon-counting measurement at the output of the interferometer gives probabilities of measurement results
\begin{equation}
p(\pm|\phi) = (1\pm v\cos\phi) /2.
\end{equation}
For the $+$ measurement result, the optimal (least-square-error) estimate is $\hat{\phi}=0$, and for the $-$ measurement result the optimal estimate is $\hat{\phi}=\pi$.
With these estimates, the {\rmse} is given by
\begin{equation}
\label{eq:exvar}
\Delta_\phi \hat{\Phi} = \sqrt{\phi^2 (1 + v\cos\phi) /2 + (\pi-|\phi|)^2 (1- v\cos\phi) /2}.
\end{equation}
The absolute value of $\phi$ is taken above to take account of the fact that the difference should be determined modulo $2\pi$.
 For $v=1$, the error is zero at $\phi=0$ and $\phi=\pi$. 

In contrast, using the inverse square-root of the Fisher information (as for the CRB without correcting for bias) would give the lower bound
\begin{equation}
\Delta_\phi \hat{\Phi} \ge \frac {\sqrt{1-v^2\cos^2\phi}}{v|\sin\phi|}.
\end{equation}
In the limit $v\to 1$, the uncorrected CRB gives a result exactly equal to 1.
This is already greater than the actual {\rmse} for some $\phi$.
For imperfect visibility, the contrast is even stronger.
The uncorrected bound \emph{diverges} at $\phi=0$ and $\pi$, even though the actual measurement error is a \emph{minimum} there.
This result is illustrated in Fig.~\ref{fig:paradox}.
It is therefore clear that the CRB can give completely misleading results if it is not corrected for bias.
On the other hand, correcting the CRB for bias, via Eq.~(\ref{bias}), yields a bound exactly equal to the RMSE \eqref{eq:exvar}.

The QCRB in Eq.~(\ref{chain}), which assumes zero bias, gives $1/v$, and is also violated near $\phi = 0$ and $\phi = \pm \pi$
 by the biased measurements considered here.
Similarly, the HHB in Eq.~(\ref{chain}) yields a lower bound of $1$ (the same as the QCRB for $v=1$), which is also violated by the biased measurements considered here.
Thus we can see that caution needs to be employed in using Eq.~(\ref{chain}), because it requires unbiased measurements.
Such measurements are impossible in some cases, and even reasonable measurements can give highly biased estimates, resulting in a violation of the QCRB and HHB in Eq.~(\ref{chain}).

It is also interesting to compare these results to the error propagation formula, which is often used to estimate the measurement error.
The error propagation formula leads to the estimate of the error (using measurement operator $X=\ket{0}\bra{1}+\ket{1}\bra{0}$),
\begin{align}
\Delta_\phi \hat{\Phi} &\approx \frac{\sqrt{\langle X^2\rangle_\phi-\langle X\rangle_\phi^2}}{\left|\frac{d}{d\phi}\langle X\rangle_\phi\right|} \nn
&= \frac{\sqrt{1-v^2\cos^2\phi}}{v|\sin\phi|}.
\end{align}
Thus the error propagation formula gives an estimate of the uncertainty that is identical to the uncorrected Cram\'er-Rao bound.

\begin{figure}[!t]
\centering
\includegraphics[width=0.47\textwidth]{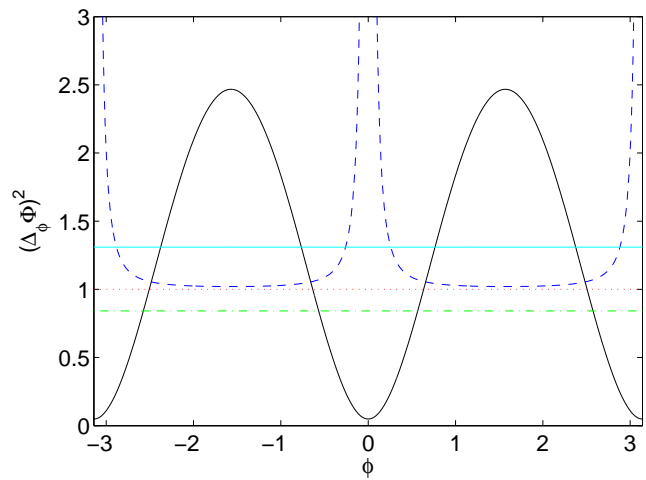}
\caption{{\mse} for phase measurements with a single photon using an interferometer with visibility $v=0.99$ and photon counting at the outputs. 
The actual {\mse}, as well as the Cram\'er-Rao bound with the correction for bias, is given by the solid curve (black).
The uncorrected Cram\'er-Rao bound, as well as the estimate given by the error propagation formula, is given as the dashed curve (dark blue).
The horizontal dash-dotted line  (green)  is the conjectured bound on the {\amse}, $k_C^2/\langle N+1 \rangle^2$.
The horizontal solid line  (light blue)  is the actual {\amse} for these measurements (obtained by averaging the {\mse} over $\phi$), 
and the horizontal dotted line  (red)  is the Helstrom-Holevo bound, which in this case is only slightly smaller than the quantum Cram\'er-Rao bound of $1/v$.}
\label{fig:paradox}
\end{figure}

\section{Conclusions}
We have rigorously proven that the square root of the average mean-square error ({\ramse}) of phase measurements is lower bounded 
by the Heisenberg limit $k/\langle G + 1\rangle$ [Eq.~\eqref{eq:heis0}]. 
The inequality with $k=k_A\approx 0.56$ holds in the case where the generator of the phase shifts has nonnegative integer eigenvalues.
We obtain a very similar result in the case where $G$ also has negative integer eigenvalues.
The result is as in Eq.~\eqref{eq:abheis}, where the absolute value of $G$ is used, and the scaling constant is again $k_A$.

These results mean that the accuracy of super-Heisenberg measurement schemes is essentially illusory.
They may work for a small range of phases, but if one considers the additional resources needed to locate an unknown phase to within the required range, the overall measurement will not violate the Heisenberg limit.

A new feature of our form of the Heisenberg limit is that it holds for all $\langle G \rangle$, not just in the asymptotic limit of large $\langle G \rangle$.
We achieve this by adding $1$ to the denominator.
This modification is necessary, because otherwise the inequality would indicate that the error must approach infinity in the limit $\langle G \rangle \to 0$.
This is impossible because phase has a bounded range. 

As well as the analytical result stated above, we have very  powerful evidence for a stronger  bound 
with $k_A$ replaced by $k_C\approx 1.38$.
We  have provided extensive  numerical evidence that the inequality holds with this larger scaling constant,
both in the case of the {\ramse} and for the square root of the Holevo variance.
In the case where $G$ has negative eigenvalues, the numerical results indicate that Eq.~\eqref{eq:abheis} holds with the scaling constant $k'_C\approx 0.79$, which is again larger than $k_A$.

 These stronger lower bounds are also supported by  asymptotic expansions of the exact solution for minimal Holevo variance, both for generators with nonnegative eigenvalues and generators without this restriction.
A similar result for the RAMSE has also been obtained, via an asymptotic expansion of a lower bound for this quantity, for the case of a generator that is \emph{not} restricted to nonnegative eigenvalues.
The case where the eigenvalues of $G$ are restricted to nonnegative eigenvalues is a possible area for future study.
The asymptotic expansions also enable us to show that these stronger lower bounds are asymptotically achievable.
That is, the minimum {\ramse} is equal to the lower bounds to leading order. 

We showed how various schemes that have been proposed to break the 
Heisenberg limit  do not break our bound on the {\ramse}.
The primary reason for this is that they can only violate the Heisenberg limit scaling if the phase shift is already known,
not when averaging over the phase shift.
Another factor is that they typically consider the Cram\'er-Rao bound, which is problematic for the Heisenberg limit.
It cannot provide a nontrivial lower bound for fixed mean photon number, as is required for the Heisenberg limit.
In addition, it requires knowledge of the bias.
Our alternative approach circumvents these limitations.

Our bound also differs from the Cram\'er-Rao bound in that it scales as $1/m$ in the number of copies of the state.
The Cram\'er-Rao bound scales as $1/\sqrt{m}$, but we have found that such scaling is impossible for a fixed $\langle N \rangle$.
This indicates that it is fundamentally impossible to obtain the Heisenberg limit from the Cram\'er-Rao bound.

\acknowledgments
DWB is funded by an ARC Future Fellowship (FT100100761).  MJWH, MZ, and HMW are supported by the ARC Centre of Excellence CE110001027. 

\appendix

\section{Details for the proof of Lemma 2}
\label{sec:lem2det}
There are two results used in the proof of Lemma \ref{lemz}, which are proven here.
 First,  we show that for this operator the distribution of $G^{(s)}$ is the same as the distribution of $G$ for $\rho_0$.
The normalisation condition for the covariant measurement gives
\begin{equation}
\openone = \int d\theta \, e^{-iG\theta} \overline{M}_{0}e^{iG\theta},
\end{equation}
so
\begin{align}
\delta_{n,n'} \delta_{d,d'} &= \int d\theta \, \bra{n,d} e^{-iG\theta} \overline{M}_{0}e^{iG\theta} \ket{n',d'} \nn
&= \int d\theta \, e^{i(n'-n)\theta} \bra{n,d} \overline{M}_{0} \ket{n',d'} \nn
&= 2\pi \delta_{n,n'} \bra{n,d} \overline{M}_{0} \ket{n',d'}.
\end{align}
This means that $\langle n,d|\overline{M}_0|n,d'\rangle=\delta_{d,d'}/2\pi$.
Then, evaluating the distribution for $G^{(s)}$ gives
\begin{align} \label{eq:samep}
\bra{n} \rho_0^{(s)}|n\rangle &= 2\pi \sum_{d,d'\leq D(n)} \langle n,d'|\rho_0|n,d\rangle \, \langle n,d|\overline{M}_0|n,d'\rangle \nn
&= 2\pi \sum_{d,d'\leq D(n)} \langle n,d|\rho_0|n,d'\rangle \,(2\pi)^{-1}\delta_{d,d'} \nn
&= {\rm Tr}(\rho_0 P_n),
\end{align}
where $P_n:=\sum_d |n,d\rangle\langle n,d|$ denotes the projection onto eigenvalue $n$ of $G$.
The expression in the last line is the distribution of $G$ for $\rho_0$.

 Second,  we must show that $\rho_0^{(s)}$ is positive and has trace one, and is therefore a valid density operator.
Note one can always write the positive operators $\rho_0$ and $\overline{M}_0$ as sums of (not necessarily normalised or orthogonal) kets:
\begin{equation}
\rho_0 = \sum_\lambda |\lambda\rangle\langle\lambda|,~~\overline{M}_0 = \sum_\mu |\mu\rangle\langle\mu|.
\end{equation}
Hence, for any state $|\psi\rangle=\sum_n\psi_n|n\rangle$, 
\begin{align}
\langle \psi|\rho_0^{(s)}|\psi\rangle &= 2\pi \sum_{n,n'\in S,d\le D(n),d'\le D(n')} \psi_n\psi_{n'}^* \nn
& \quad \times \langle n',d'|\rho_0|n,d\rangle\langle n,d|\overline{M}_0|n',d'\rangle \nn
&= 2\pi \sum_{\lambda,\mu} | X_{\lambda,\mu} |^2 \geq 0 ,
\end{align}
where
\begin{equation}
X_{\lambda,\mu}:=\sum_{n\in S,d \le D(n)}\psi_n\langle\lambda|n,d\rangle\langle n,d|\mu\rangle.
\end{equation}
Hence $\rho_0^{(s)}\ge 0$, as required.
 Summing Eq.~\eqref{eq:samep} over $n$ yields ${\rm Tr}(\rho_0^{(s)})=1$, so $\rho_0^{(s)}$ is a valid density operator. 

\section{Details for Eq.~(45)} 
\label{detail}
Here we give the details of  the derivation  of Eq.~\eqref{eq:nonint}.
 First, variation of Eq.~(\ref{Lambda}) gives the optimising distribution 
\begin{equation}
p_n = e^{-(\alpha+1)} e^{-\beta|n-g|}, 
\end{equation}
which is the double exponential (Laplace) distribution found  in \cite{Hall12}. 
In Ref.~\cite{Hall12} it was assumed that $g$ was an integer, but  here we consider the more general case that $g$ is an arbitrary real number.

We require  $\beta>0$  in order for the distribution to be normalisable.
Then normalisation gives the restriction
\begin{equation}
 \frac{e^{-\beta r}+e^{-\beta(1-r)}}{1-e^{-\beta}} =  e^{\alpha+1},
\end{equation}
where $r=\lceil g \rceil - g$.
We then find that the mean value is
\begin{equation}
\langle |G-g| \rangle = \frac{(e^{\beta r}+e^{-\beta r})(1-r)+(e^{\beta(1-r)}+e^{-\beta (1-r)})r}{(1-e^{-\beta})(e^{\beta r}+e^{\beta(1-r)})},
\end{equation}
and
\begin{align}\label{halfway}
H(G) &= (\alpha+1) + \beta \langle|G-g| \rangle \nn
&= \ln \left( \frac{e^{-\beta r}+e^{-\beta(1-r)}}{1-e^{-\beta}}\right) +  \beta \langle|G-g| \rangle.
\end{align}

Without loss of generality we take $r\in[0,1/2]$.
The problem is symmetric about $r=1/2$, so these results also apply to $r>1/2$.
Then we obtain
\begin{align}
& 2\langle|G-g| \rangle \! + \! 1 - \exp \left[\beta r +\ln\left( \frac{e^{-\beta r}+e^{-\beta(1-r)}}{1-e^{-\beta}}\right) \right] \nn
&= \frac{(e^{\beta}-e^{2\beta r})[1+e^{2\beta r} + 2r(e^{\beta}-1)]}{(e^{\beta}-1)(e^{\beta}+e^{2\beta r})}\ge 0.
\end{align}
This then yields
\begin{equation}
 \beta r +\ln\left( \frac{e^{-\beta r}+e^{-\beta(1-r)}}{1-e^{-\beta}}\right) \le \ln (2\langle|G-g| \rangle+1).
\label{eq:fstpt}
\end{equation}

It can be shown that
\begin{align}
\label{eq:insim}
&e^\beta -(1+\beta) - e^{2\beta r} [e^{-\beta}(1+2\beta r)+\beta-1-2\beta r] \nn
&= (1+\beta r-\beta \langle|G-g| \rangle)(1-e^{-\beta})(e^{\beta}+e^{2\beta r}).
\end{align}
Next, for $\beta >  0$ and $r\ge 0$ we have
\begin{align}
0 & <  \beta (1-2r)^2+4r \nn
&= 1+\beta (1-2r) -[1+2(\beta -2)r-4\beta r^2] \nn
& \le e^{\beta (1-2r)} -[1+2(\beta-2)r-4\beta r^2].
\end{align}
This means that
\begin{align}
&\frac d{d\beta} \left\{ e^\beta -(1+\beta) - e^{2\beta r} [e^{-\beta}(1+2\beta r)+\beta-1-2\beta r] \right\} \nn
&= (1-e^{-\beta})e^{2\beta r} \left\{e^{\beta(1-2r)} -[1+2(\beta-2)r-4\beta r^2] \right\} \nn
&  > 0
\end{align}
 for $\beta>0$. 
This implies that
\begin{equation}
 e^\beta -(1+\beta) - e^{2\beta r} [e^{-\beta}(1+2\beta r)+\beta-1-2\beta r]  >  0,
\end{equation}
 for $\beta > 0$, 
because the left-hand side is zero for $\beta=0$,  and has positive  slope for  $\beta> 0$. 
Using Eq.~\eqref{eq:insim}, this gives
\begin{equation}
-\beta r + \beta \langle|G-g| \rangle  <  1.
\label{eq:secpt}
\end{equation}

Now adding Eqs.~\eqref{eq:fstpt} and \eqref{eq:secpt} yields
\begin{equation}
\ln\left( \frac{e^{-\beta r}+e^{-\beta(1-r)}}{1-e^{-\beta}}\right) +  \beta \langle|G-g| \rangle  <  \ln (2\langle|G-g| \rangle+1)+1,
\end{equation}
 and substitution into Eq.~(\ref{halfway}) gives Eq.~\eqref{eq:nonint} as required. 

 \section{Inequality proofs} 
\label{sec:tighter}
 To prove $\theta^2 \le f_3(\theta)$, consider the function
\begin{align}
\Delta(\theta) &= \sqrt{ (\pi^2/2)(1-\cos\theta) - (\pi^2/4-1)(1-\cos2\theta)/2} \nn& \quad - \theta.
\end{align}
Taking the derivative with respect to $\theta$ and solving to find the turning points of $\Delta(\theta)$ yields only two in the range $[0,\pi]$.
One is at $\theta=0$, and the other is at $\theta\approx 2.23$.
As $\Delta(\theta)>0$ for $\theta\approx 2.23$, and $\Delta(\theta)=0$ for $\theta=0$ or $\pi$, we have $\Delta(\theta)\ge 0$ for $\theta\in[0,\pi]$.
This proves Eq.~\eqref{eq:inva} for $\theta\in[0,\pi]$, and the result for $\theta\in[-\pi,0]$ follows because Eq.~\eqref{eq:inva} is symmetric.

Next, to prove a lower bound on $\delta\Phi$, we use Eq.~(\ref{earlier}), which was 
\begin{equation}
\langle \cos\Theta \rangle \ge \cos\sqrt{\langle \Theta^2\rangle} .
\end{equation}
Using this, we have 
\begin{equation}
\label{eq:lobnd}
\delta\hat\Phi = \sqrt{\langle \Theta^2\rangle} \ge \arccos\langle \cos\Theta \rangle = \arccos[1-(\delta_1\hat\Phi)^2/2].
\end{equation}
Now note that, if we have a state that minimises $\delta\hat\Phi$, then it can not give a value of $\arccos(1-(\delta_1\hat\Phi)^2/2)$ smaller than that for the minimum value of $\delta_1\hat\Phi$.
This means that we can lower bound $\delta\hat\Phi$ by the minimum value of $\arccos[1-(\delta_1\hat\Phi)^2/2]$.
This is a tighter lower bound on $\delta\hat\Phi$ than $\delta_1\hat\Phi$, because $\arccos(1-x^2/2)=x+x^3/24+O(x^5)$.
Unfortunately $\arccos[1-(\delta_1\hat\Phi)^2/2]$ can still be below $k_C/\langle N+1 \rangle$.

\section{Asymptotic behaviour for Holevo variance}
\label{asymp}
 Here we derive the asymptotic results for the variance given in Sec.~\ref{sec:asymp}.
Using Eq.~(12) of Ref.~\cite{hansen} (with $q=p+1$), one has
\begin{align}
\sum_{k=1}^\infty J_{x+k}(z)\,J_{x+k+1}(z) &= \frac{z}{2}\left[J_{x+1}^2(z) - J_x(z)\,J_{x+2}(z)\right] \nn
&= \frac{z}{2} [J_{x+1}(z)]^2 ,
\end{align}
where the second equality follows from Eq.~(\ref{zero}).

Second, from Eq.~(32) of Ref.~\cite{hansen}, one has
\begin{align}
 \sum_{k=1}^\infty [J_{x+k}(z)]^2 &= \frac{z}{2}\left[ J_{x+1}(z)\frac{\partial J_x(z)}{\partial x}-J_x(z)\frac{\partial J_{x+1}(z)}{\partial x}\right] \nn
&= \frac{z}{2} J_{x+1}(z)\frac{\partial J_x(z)}{\partial x} ,
\end{align}
where the second equality similarly follows via Eq.~(\ref{zero}).
We therefore have
 \begin{equation}
 \langle e^{i\Theta}\rangle = \frac{J_{x+1}(z)}{[\partial J_x(z)/\partial x]} .
 \end{equation}
Using Eq.\ \eqref{eigsol} then yields
\begin{equation}
\label{eq:nav}
\langle N+1 \rangle = \frac{z J_{x+1}(z)}{[{\partial J_x(z)/}{\partial x}]}-x.
\end{equation}

There are a number of asymptotic results that we can use.  From \cite{Elbert,Oliver}
\begin{align}
\label{zeq}
z &= x + \gamma x^{1/3} + \frac{3\gamma^2}{10 x^{1/3}} + \frac{5-\gamma^3}{350 x} - \frac{479\gamma^4+20\gamma}{63000 x^{5/3}} \nn
&\quad +\frac{20231\gamma^5-27550\gamma^2}{8085000 x^{7/3}} + O(x^{-3}),
\end{align}
where $\gamma=|z_A|/2^{1/3}$,  and  $z_A$ is the first zero of the Airy function.  We can invert this relation to give
\begin{align}
\label{xeq}
x &= z - \gamma z^{1/3} + \frac{\gamma^2}{30 z^{1/3}} - \frac{5-\gamma^3}{350 z} + \frac{281 \gamma^4-5220\gamma}{567000 z^{5/3}} \nn
& \quad + \frac{73769\gamma^5-3312450\gamma^2}{654885000 z^{7/3}}+ O(z^{-3}).
\end{align}
Now from Eq.\ (9.3.23) of \cite{abrams} we have, for $z=x+yx^{1/3}$,
\begin{align}
\label{abram}
J_x(x+yx^{1/3}) &\sim \frac{2^{1/3}}{x^{1/3}} {\rm Ai} (-2^{1/3} y) \left( 1+\sum_{k=1}^\infty \frac{f_k(y)}{x^{2k/3}}\right) \nn
& \quad + \frac{2^{2/3}}{x} {\rm Ai}' (-2^{1/3} y) \sum_{k=0}^\infty \frac{g_k(y)}{x^{2k/3}}.
\end{align}
The functions are given by
\begin{align}
f_1(y) &= -\frac 15 y, \\
f_2(y) &= -\frac 9{100}y^5+\frac 3{35}y^2, \\
f_3(y) &= \frac{957}{7000}y^6-\frac{173}{3150}y^3-\frac 1{225}, \\
f_4(y) &= \frac{27}{20000}y^{10} - \frac{23573}{147000} y^7 + \frac{5903}{138600} y^4 + \frac{947}{346500}y, \\
g_0(y) &= \frac 3{10} y^2, \\
g_1(y) &= -\frac {17}{70}y^3+\frac 1{70}, \\
g_2(y) &= -\frac{9}{1000}y^7+\frac{611}{3150}y^4-\frac {37}{3150}y, \\
g_3(y) &= \frac{549}{28000}y^8 - \frac{110767}{693000} y^5 + \frac{79}{12375} y^2.
\end{align}

Using Eq.\ \eqref{zeq} we have
\begin{align}
y&=(z-x)/x^{1/3} \nonumber \\
&=  \gamma + \frac{3\gamma^2}{10 x^{2/3}} + \frac{5-\gamma^3}{350 x^{4/3}} - \frac{479\gamma^4+20\gamma}{63000 x^2} \nn
& \quad +\frac{20231\gamma^5-27550\gamma^2}{8085000 x^{8/3}} + O(x^{-3}).
\end{align}
We can then substitute Eq.\ \eqref{xeq}, which gives
\begin{align}
y - \gamma &= \frac{3\gamma^2}{10 z^{2/3}} +  \frac{5+69\gamma^3}{350 z^{4/3}} + \frac{9361\gamma^4+1180\gamma}{63000 z^2} \nn
&\quad +\frac{8691349\gamma^5+1484550\gamma^2}{72765000 z^{8/3}} + O(z^{-3}).
\end{align}

Now to take the derivative with respect to the order, we can use
\begin{align}
\label{eq:derv}
\frac{\partial}{\partial x} J_x(z) &= \frac d{dx} J_x(x+yx^{1/3}) + \frac{dy}{dx}\frac d{dy} J_x(x+yx^{1/3}) \nonumber \\
&= \frac d{dx} J_x(x+yx^{1/3}) \nn
& \quad - \left(\frac z{3x^{4/3}}+\frac 2{3x^{1/3}}\right)\frac d{dy} J_x(x+yx^{1/3}) .
\end{align}
In the resulting expression it is possible to expand in a series for $y$ about $\gamma$, then expand in a series about $z$.

It is possible to determine a series in $z$ for
\begin{equation}
\label{eq:nser}
 \frac{z J_{x+1}(z)}{[{\partial J_x(z)/}{\partial x}]}-x.
\end{equation}
We can then invert this series, finding a series in  $\langle N+1 \rangle$  for $z$.
Similarly, it is possible to find a series in $z$ for
\begin{equation}
\frac{J_{x+1}(z)}{[{\partial J_x(z)/}{\partial x}]}.
\end{equation}
Then we can express
\begin{equation}
\left(\frac{J_{x+1}(z)}{[{\partial J_x(z)/}{\partial x}]}\right)^{-2}-1,
\end{equation}
as a series in $z$.
Substituting the series for $z$ in $\langle N +1 \rangle$, the overall result is as in Eq.~\eqref{eq:series}, with
\begin{align}
b_2 &= \frac{4|z_A|^3}{27} = k_C^2 \approx 1.8936, \\ 
b_4 &= \frac{16|z_A|^6}{1215} \approx 2.1514, \\ 
b_6 &= \frac{16|z_A|^6(27+40|z_A|^3)}{688905} \approx 2.0424, \\
b_8 &= \frac{256|z_A|^9(3+|z_A|^3)}{4428675} \approx  1.9050, \\ 
b_{10} &= \frac{64|z_A|^9(2673+9252|z_A|^3+1120|z_A|^6)}{21483502425} \approx 1.8906.
\end{align}

Next, to place an upper bound on the {\ramse} for this state, we use Eq.~\eqref{eq:inva}.
 This equation gives 
\begin{equation}
\langle \Theta^2 \rangle \le (\pi^2/2)(1-\langle\cos\Theta\rangle) - (\pi^2/4-1)(1-\langle\cos2\Theta\rangle)/2.
\end{equation}
To find the expectation value $\langle e^{i2\Theta}\rangle$, we use
\begin{align}
\langle e^{i2\Theta}\rangle &= A^2 \sum_{n=0}^\infty J_{x+n+1}(z)\,J_{x+n+3}(z) \nn
&= \frac{\sum_{k=1}^\infty J_{x+k}(z)\,J_{x+k+2}(z) }{\sum_{k=1}^\infty [J_{x+k}(z)]^2}.
\end{align}
Again using Eq.~(12) of Ref.~\cite{hansen}, but now with $q=p+2$, we have
\begin{align}
&\sum_{k=1}^\infty J_{x+k}(z)\,J_{x+k+2}(z) \nn
&= \frac{z}{4}\left[J_{x+1}(z)J_{x+2}(z) - J_x(z)\,J_{x+3}(z)\right] \nn
&= \frac{z}{4} J_{x+1}(z)J_{x+2}(z) .
\end{align}
That then gives
\begin{equation}
\langle e^{i2\Theta}\rangle = \frac{J_{x+2}(z)}{2[\partial J_x(z)/\partial x]} .
\end{equation}

Using this expression, and expanding the Bessel function solution in a series as above, we obtain
\begin{align}
\langle \Theta^2 \rangle &\le (\pi^2/2)(1-\langle\cos\Theta\rangle) - (\pi^2/4-1)(1-\langle\cos2\Theta\rangle)/2 \nn
&= \frac{4|z_A|^3}{27\langle N+1 \rangle^2} + \frac{(\pi^2-4)|z_A|^3}{54\langle N+1 \rangle^3} + O\left(\frac 1{\langle N+1 \rangle^4}\right).
\end{align}
This results in the inequality given in Eq.~\eqref{eq:upbnd}.

Expanding in a series also gives
\begin{align}
(\arccos\langle \cos\Theta \rangle)^2 &= \frac{4|z_A|^3}{27\langle N+1 \rangle^2} - \frac{16|z_A|^6}{10935\langle N+1 \rangle^4} \nn
&\quad + O\left(\frac 1{\langle N+1 \rangle^6}\right).
\end{align}
Using Eq.~\eqref{eq:lobnd}, we therefore have the upper and lower bounds on $(\delta\hat{\Phi})^2$,
\begin{align}
&\frac {k_C^2}{\langle N+1 \rangle^2}- O\left(\frac 1{\langle N+1 \rangle^4}\right) \le (\delta\hat{\Phi})^2 \nn
&\le \frac {k_C^2}{\langle N+1 \rangle^2}+ O\left(\frac 1{\langle N+1 \rangle^3}\right).
\end{align}

\section{Asymptotic behaviour for variance with fixed $\langle|J|\rangle$}
\label{asymp2}
The normalization constraint yields, using Eq.~(32) of Ref.~\cite{hansen},
\begin{align}
A^{-2} &= [J_{x}(z)]^2 + 2\sum_{j=1}^\infty [J_{x+j}(z)]^2 \nonumber \\
&= [J_{x}(z)]^2 + z\left[ J_{x+1}(z)\frac{\partial J_x(z)}{\partial x}-J_x(z)\frac{\partial J_{x+1}(z)}{\partial x}\right].
\end{align}
We also have
\begin{align} \nonumber
\langle |J|\rangle &= 2 A^2 \sum_{j=0}^\infty j\,[J_{x+j}(z)]^2 , \\
\langle e^{i\Theta}\rangle &= A^2 \left[ 2J_{x}(z)\,J_{x+1}(z)+2\sum_{j=1}^\infty J_{x+j}(z)\,J_{x+j+1}(z) \right].
\end{align}
Using Eq.~(12) of \cite{hansen} we have
\begin{align}
\langle e^{i\Theta}\rangle &= A^2 \left[ 2J_{x}(z)\,J_{x+1}(z)+zJ_{x+1}^2(z) \right. \nonumber \\
&\quad  \left. - zJ_x(z)\,J_{x+2}(z)\right] .
\end{align}
In this case we have
\begin{equation}
\label{eigsol2}
\langle e^{i\Theta}\rangle = (\alpha+\beta\langle |J| \rangle) = (x+\langle |J| \rangle )/z,
\end{equation}
so
\begin{equation}
 \langle |J| \rangle = z\langle e^{i\Theta}\rangle -x.
\end{equation}

The first zero of the derivative of the Bessel function is given by \cite{Elbert,Oliver}
\begin{align}
z&=x+\gamma' x^{1/3} + \left( \frac {3\gamma'^2}{10}-\frac 1{10\gamma'}\right)\frac 1{x^{1/3}} \nonumber \\
&\quad  - \left( \frac {\gamma'^3}{350}+\frac 1{25}+\frac 1{200\gamma'^3}\right)\frac 1x \nonumber \\
& \quad - \frac{958\gamma'^9-2036\gamma'^6-84\gamma'^3+63}{126000\gamma'^5 x^{5/3}} + O(x^{-7/3}).
\end{align}
We have corrected an error in Ref.~\cite{Elbert} where ``840'' was given instead of ``84''.
Here $\gamma'=|z'_A|/2^{1/3}$, where $z'_A$ is the first zero of the derivative of the Airy function.
Inverting this series, and performing series expansions similar to that for the first case, gives
the series in Eq.~\eqref{eq:spinser} with
\begin{align}
d_2 &= \frac{16|z'_A|^3}{27} = {k'_C}^2 \approx  0.6266, \\
d_3 &= \frac{32|z'_A|^3}{27} \approx 1.2533, \\
d_4 &= \frac{16|z'_A|^3 (111 - 4 |z'_A|^3)}{1215} \approx 1.4868, \\
d_5 &= \frac{64 |z'_A|^3 (21 - 4 |z'_A|^3)}{1215} \approx 0.9341, \\
d_6 &= \frac{16 |z'_A|^3 (-63 - 40488 |z'_A|^3 + 160 |z'_A|^6)}{1148175} \approx -0.6292.
\end{align}

 To determine an upper bound, we need to determine 
\begin{align}
\langle e^{i2\Theta}\rangle &= A^2 \left[ J_{x+1}^2(z)+ 2J_{x}(z)\,J_{x+2}(z) \right. \nn & \quad
\left.+2\sum_{j=1}^\infty J_{x+j}(z)\,J_{x+j+2}(z) \right] \nn
&=  A^2 \left\{ J_{x+1}^2(z)+ 2J_{x}(z)\,J_{x+2}(z) \right. \nn & \quad
\left.+\frac z2 [J_{x+1}(z)J_{x+2}(z) - J_x(z)\,J_{x+3}(z)] \right\}.
\end{align}
Expanding in a series then gives
\begin{align}
\langle \Theta^2 \rangle &\le \frac{16|z'_A|^3}{27\langle 2|J|+1\rangle^2} + \frac{32|z'_A|^3}{27\langle 2|J|+1\rangle^3} \nn
& \quad + O\left(\frac 1{\langle 2|J|+1 \rangle^4}\right).
\end{align}
Therefore, the {\mse}  is upper and lower bounded as
\begin{align}
&\frac {{k'_C}^2}{\langle 2|J|+1 \rangle^2}+ O\left(\frac 1{\langle 2|J|+1 \rangle^3}\right) \le (\delta\hat{\Phi})^2 \nn
&\le \frac {{k'_C}^2}{\langle 2|J|+1 \rangle^2}+ O\left(\frac 1{\langle 2|J|+1 \rangle^3}\right).
\end{align} 

\section{Continuity of the expected phase estimate}
\label{sec:cont}
Here we show that $\langle\hat\Phi\rangle_\phi^{\phi_r}$ is a continuous function of $\phi$.
Defining
\beq
X_{\phi_r} := \int_{\phi_r-\pi}^{\phi_r+\pi} d\hat\phi \, \hat\phi \, M_{\hat\phi},
\eeq
we have
\begin{equation}
\langle\hat\Phi\rangle_\phi^{\phi_r} = {\rm Tr} ( X_{\phi_r} \rho_\phi) .
\end{equation}
The expectation value of the phase estimate at $\phi+\epsilon$ is
\beq
\langle\hat\Phi\rangle_{\phi+\epsilon}^{\phi_r} = {\rm Tr}( X_{\phi_r} e^{-iG\epsilon} \rho_\phi e^{iG\epsilon} ).
\eeq
The difference is
\beq
|\langle\hat\Phi\rangle_{\phi+\epsilon}^{\phi_r} - \langle\hat\Phi\rangle_\phi^{\phi_r}| = |{\rm Tr} [X_{\phi_r} (e^{-iG\epsilon} \rho_\phi e^{iG\epsilon} - \rho_{\phi})]|.
\eeq

Take $\ket{\xi_j}$ to be the eigenbasis of $X_{\phi_r}$.
Then
\begin{align}
|\bra{\xi_j} X_{\phi_r} \ket{\xi_j}|  &\le  \int_{\phi_r-\pi}^{\phi_r+\pi} d\hat\phi \, |\hat\phi| \, \bra{\xi_j} M_{\hat\phi} \ket{\xi_j} \nn
&\le  2\pi \int_{\phi_r-\pi}^{\phi_r+\pi} d\hat\phi \, \bra{\xi_j} M_{\hat\phi} \ket{\xi_j} \nn
&=  2\pi \bra{\xi_j} \openone \ket{\xi_j} =2\pi.
\end{align}
Using this, we find
\begin{align}
& |{\rm Tr} [X_{\phi_r} (e^{-iG\epsilon} \rho_\phi e^{iG\epsilon} - \rho_\phi)]| \nn
&= \left| \sum_{j} \bra{\xi_j} X_{\phi_r} \ket{\xi_j}\bra{\xi_j} (e^{-iG\epsilon} \rho_\phi e^{iG\epsilon} - \rho_\phi)\ket{\xi_j}  \right| \nn
&\le 2\pi \sum_{j}  \left| \bra{\xi_j} (e^{-iG\epsilon} \rho_\phi e^{iG\epsilon} - \rho_\phi)\ket{\xi_j}  \right|.
\end{align}
Take $\ket{\zeta_j}$ to be the eigenbasis of $e^{-iG\epsilon} \rho_\phi e^{iG\epsilon} - \rho_\phi$.
Then
\begin{align}
& \sum_{j}  \left| \bra{\xi_j} (e^{-iG\epsilon} \rho_\phi e^{iG\epsilon} - \rho_\phi)\ket{\xi_j}  \right| \nn
&= \sum_{j}  \left| \sum_{k} |\braket{\xi_j}{\zeta_k}|^2 \bra{\zeta_k} (e^{-iG\epsilon} \rho_\phi e^{iG\epsilon} - \rho_\phi)\ket{\zeta_k}  \right| \nn
&\le \sum_{k} \left|\bra{\zeta_k} (e^{-iG\epsilon} \rho_\phi e^{iG\epsilon} - \rho_\phi)\ket{\zeta_k}  \right| \nn
&=\| e^{-iG\epsilon} \rho_\phi e^{iG\epsilon} - \rho_\phi \|_1.
\end{align}
Hence
\beq
|{\rm Tr} [X_{\phi_r} (e^{-iG\epsilon} \rho e^{iG\epsilon} - \rho_\phi)]| \le 2\pi \| e^{-iG\epsilon} \rho_\phi e^{iG\epsilon} - \rho_\phi \|_1 \, .
\eeq

Take the state to be given by
\begin{equation}
\rho_\phi = \sum_j p_j \ket{\psi_j}\bra{\psi_j}.
\end{equation}
For $\ket{\psi_j}$,
\begin{align}
\bra{\psi_j} e^{-iG\epsilon} \ket{\psi_j} &= \sum_k |\psi_{jk}|^2 e^{-ik\epsilon} \nn
&= 1-\sum_k |\psi_{jk}|^2 (1-e^{-ik\epsilon}).
\end{align}
Evaluating the distance from 1 gives
\begin{align}
\left|\sum_k |\psi_{jk}|^2 (1-e^{ik\epsilon})\right| &\le \sum_k |\psi_{jk}|^2 |1-e^{-ik\epsilon}| \nn
&\le \sum_k |\psi_{jk}|^2 |k\epsilon| \nn
&= \bra{\psi_j} |G| \ket{\psi_j}|\epsilon|,
\end{align}
so
\begin{align}
D(\ket{\psi_j}, e^{-iG\epsilon} \ket{\psi_j}) &= 2\sqrt{1-|\bra{\psi_j} e^{-iG\epsilon} \ket{\psi_j}|^2} \nn
&\le 2\sqrt{1-|1-\bra{\psi_j} |G| \ket{\psi_j} |\epsilon||^2} \nn
&\le 2\sqrt{2\bra{\psi_j} |G| \ket{\psi_j} |\epsilon|}.
\end{align}
By the convexity of trace distance
\begin{align}
\| e^{-iG\epsilon} \rho_\phi e^{iG\epsilon} - \rho_\phi \|_1 & \le \sum_j p_j D(\ket{\psi_j}, e^{-iG\epsilon} \ket{\psi_j}) \nn
&\le \sum_i p_i 2\sqrt{2\bra{\psi_j} |G| \ket{\psi_j} |\epsilon|} \nn
&\le 2\sqrt{2\langle |G|\rangle |\epsilon|},
\end{align}
where $\langle |G| \rangle={\rm Tr}(|G|\rho)$. Hence
\beq
\label{eq:cont}
|\langle\hat\Phi\rangle_{\phi+\epsilon}^{\phi_r} - \langle\hat\Phi\rangle_\phi^{\phi_r}| \le 4\pi \sqrt{2\langle |G|\rangle |\epsilon|}.
\eeq
Thus the expectation value of the phase estimate must be a continuous function of $\phi$ unless $\langle |G|\rangle$ is infinite.


\begin{thebibliography}{99}

\bibitem{caves81} C. M. Caves, Phys. Rev. D \textbf{23}, 1693 (1981). 
\bibitem{phasesens}  V. Giovannetti, S. Lloyd, and L. Maccone, Science \textbf{306}, 1330 (2004).
\bibitem{phasecomm} R. Slavik {\it et al.}, Nature Photonics \textbf{4}, 690 (2010).
\bibitem{phaseres}  B. Yurke, S. L. McCall, and J. R. Klauder, Phys. Rev. A \textbf{33}, 4033 (1986). 
\bibitem{WisMil10} H. M. Wiseman and G. J. Milburn, {\it Quantum Measurement and Control}
 {(Cambridge University Press, Cambridge, 2010)}.
\bibitem{heislim} M. J. Holland and K. Burnett, Phys. Rev. Lett. \textbf{71}, 1355 (1993); 
Z. Y. Ou, Phys.  Rev. A \textbf{55}, 2598 (1997);  D. Braun and J. Martin, Nature Communnications \textbf{2}, 223 (2011).
\bibitem{zwierz} M. Zwierz, C. A. P\'{e}rez-Delgado, and P. Kok, Phys. Rev. Lett. \textbf{105} 180402 (2010); \textbf{107}, 059904(E) (2011).
\bibitem{GLM06} V. Giovannetti, S. Lloyd, and L. Maccone, \prl \textbf{96}, 010401 (2006).
\bibitem{dorner} U. Dorner, R. Demkowicz-Dobrza\'nski, B. J. Smith, J. S. Lundeen, W. Wasilewski, K. Banaszek, and I. A. Walmsley, Phys. Rev. Lett. \textbf{102}, 040403 (2009).
\bibitem{kacprowicz} M. Kacprowicz, R. Demkowicz-Dobrza\'nski, W. Wasilewski, K. Banaszek, and I. A. Walmsley, Nature Photonics \textbf{4}, 357 (2010). 
\bibitem{escher} B. M. Escher, R. L. de Matos Filho, and L. Davidovich, Nature Physics  \textbf{7}, 406 (2011). 
\bibitem{ani} P. M. Anisimov, G. M. Raterman, A. Chiruvelli, W. N. Plick, S. D. Huver, H. Lee, and J. P. Dowling, Phys.  Rev.  Lett. \textbf{104}, 103602 (2010).  
\bibitem{zhang} Y. R. Zhang, G. R. Jin, J. P. Cao, W. M. Liu, and H. Fan, e-print arXiv:1105.2990v4.  
\bibitem{rivas} \'{A}. Rivas and A.  Luis, New J. Phys. \textbf{14}, 093052 (2012).  
\bibitem{nonlinear} S. Boixo, S. T. Flammia, C. M. Caves, and J. M. Geremia, Phys. Rev. Lett. \textbf{98}, 090401 (2007);
S. M. Roy and S. L. Braunstein, Phys. Rev. Lett. \textbf{100}, 220501 (2008). 
\bibitem{napolit}  M. Napolitano,	M. Koschorreck, B. Dubost, N. Behbood, R. J. Sewell, and M. W. Mitchell, Nature \textbf{471}, 486 (2011).
\bibitem{shapiro} J. H. Shapiro, S. R. Shepard, and N. C. Wong, Phys. Rev. Lett. \textbf{62}, 2377 (1989).
\bibitem{shapiro2} J. H. Shapiro and S. R. Shepard, Phys. Rev. A \textbf{43}, 3795 (1991).  
\bibitem{caves} S. L. Braunstein, A. S. Lane, and C. M. Caves, Phys. Rev. Lett. \textbf{69}, 2153 (1992).
\bibitem{multi} B. L. Higgins, D. W. Berry, S. D. Bartlett, H. M. Wiseman, and G. J. Pryde, Nature \textbf{450}, 393 (2007).  
\bibitem{berry} D. W. Berry, B. L. Higgins, S. D. Bartlett, M. W. Mitchell, G. J. Pryde, and H. M. Wiseman, Phys. Rev. A. \textbf{80}, 052114  (2009). 
\bibitem{rapid} M. J. W. Hall, D. W. Berry, M. Zwierz, and H. M. Wiseman, Phys. Rev. A \textbf{85}, 041802(R) (2012).
\bibitem{Hall12} M. J. W.  Hall and H. M. Wiseman, New J. Phys. \textbf{14} 033040 (2012).
\bibitem{tsang} M. Tsang, Phys. Rev. Lett. \textbf{108}, 230401 (2012). 
\bibitem{gm} V. Giovannetti and L. Maccone, Phys. Rev. Lett. \textbf{108}, 210404 (2012). 
\bibitem{nair} R. Nair, e-print arXiv:1204.3761v1.
\bibitem{hwprx} M. J. W. Hall and H. M. Wiseman, Phys. Rev. X \textbf{2}, 041006 (2012).
\bibitem{glm} V. Giovannetti, S. Lloyd, and L. Maccone, Phys. Rev. Lett. \textbf{108}, 260405 (2012).
\bibitem{Collett} M. J. Collett, Phys. Scr. \textbf{T48}, 124 (1993).
\bibitem{Opatrny} T. Opatrn\'{y}, J. Phys. A \textbf{28}, 6961 (1995).
\bibitem{Forbes} G. W. Forbes and M. A. Alonso, Am. J. Phys. \textbf{69}, 340 (2001), ibid \textbf{69}, 1091 (2001).
\bibitem{holcov} A. S. Holevo, {\it Probabilistic and Statistical Aspects of Quantum Theory} (North-Holland, Amsterdam, 1982). 
\bibitem{WisKil97} H. M. Wiseman and R. B. Killip, Phys. Rev. A \textbf{56}, 944 (1997).
\bibitem{berrythesis} D. W. Berry, PhD thesis (University of Queensland, Australia, 2002), Ch.~2 (also available as e-print arXiv:quant-ph/0202136v1). 
\bibitem{bandilla} A. Bandilla, H.  Paul, and H. H.  Ritze, Quantum Opt. \textbf{3}, 267 (1991). 
\bibitem{entvol} M. J. W. Hall, Phys. Rev. A \textbf{59}, 2602 (1999). 
\bibitem{Hall00} M. J. W.  Hall, Phys. Rev. A \textbf{62}, 012107 (2000).
\bibitem{Hall08} M. J. W.  Hall, J. Phys. A: Math. Gen. \textbf{41}, 255301 (2008). 
\bibitem{jmodopt} M. J. W. Hall, J.  Mod. Opt. \textbf{40}, 809 (1993).  
\bibitem{bbm} I. Bia{\l}ynicki-Birula and J. Mycielski, Commun.  Math.  Phys.  \textbf{44}, 129 (1975). 
\bibitem{luis} A. Luis and J. Pe\v{r}ina, Phys. Rev. A \textbf{54}, 4564 (1996). 
\bibitem{dariano} G. M. D'Ariano, C. Macchiavello, and M. F. Sacchi, Phys. Lett. A \textbf{248}, 103 (1998);
G. M. D'Ariano, M. G. A. Paris, and P. Perinotti, J. Opt. B: Quantum Semiclass. Opt. \textbf{3}, 337 (2001).
\bibitem{Helstrom} C. W. Helstrom, {\it Quantum Detection and Estimation Theory} (Academic Press, New York, 1976).
\bibitem{ani2} K. P. Seshadreesan, P. M. Anisimov, H. Lee, and J. P. Dowling, New J. Phys. \textbf{13}, 083026 (2011).
\bibitem{Hofmann} H. F. Hofmann, \pra \textbf{79}, 033822 (2009).
\bibitem{Hyllus} P. Hyllus, L. Pezze, and A. Smerzi, Phys. Rev. Lett. \textbf{105}, 120501 (2010).
\bibitem{Fisher} R. A. Fisher, Proc. Camb. Phil. Soc. \textbf{22}, 700 (1925).
\bibitem{durkin} G. A. Durkin and J. P. Dowling, Phys. Rev. Lett. \textbf{99}, 070801 (2007).
\bibitem{qcrb} S. L. Braunstein and C. M. Caves, Phys. Rev. Lett. \textbf{72}, 3439 (1994).
\bibitem{biasedcrb}  Y. C. Eldar, IEEE Trans. Signal Process. \textbf{54}, 2943 (2006).
\bibitem{Hel67} C. W. Helstrom, Phys. Lett. A \textbf{25}, 101 (1967).
\bibitem{hansen} E. Hansen, Am. Mathemat. Monthly \textbf{73}(2), 143-150 (1966).
\bibitem{Elbert} \'A. Elbert, J. Comp. App. Math. \textbf{133}, 65 (2001).
\bibitem{Oliver} F. W. J. Olver, Proc. Cambridge Philos. Soc. \textbf{47}, 699  (1951).
\bibitem{abrams} M. Abramowitz and I. A. Stegun, \textit{Handbook of Mathematical Functions} (U.S. Government Printing Office, Washington, D.C., 1972).
\end{thebibliography}
\end{document}